\newtheorem{thm}{Theorem}[section]
\newtheorem{prop}[thm]{Proposition}
\newtheorem{lemma}[thm]{Lemma}
\newtheorem{definition}[thm]{Definition}
\theoremstyle{definition}
\newtheorem{remark}[thm]{Remark}
\newcommand{\id}{\mathrm{id}}
\begin{document}

\begin{center}
{\huge{On the lattice-geometry and birational group of the six-point multi-ratio equation}}
\vskip10mm
{\large James Atkinson}\\
\vskip3mm
{\small
Northumbria University, Newcastle upon Tyne, UK, and the University of Sydney, New South Wales, Australia.
}
\end{center}
\noindent {\bf Abstract.}
The inherent self-consistency properties of the six-point multi-ratio equation allow it to be considered on a domain associated with a T-shaped Coxeter-Dynkin diagram.
This extends the KP lattice, which has $A_N$ symmetry, and incorporates also KdV-type dynamics on a sub-domain with $D_N$ symmetry, and Painlev\'e dynamics on a sub-domain with $\tilde{E}_8$ symmetry.
More generally, it can be seen as a distinguished representation of Coble's Cremona group associated with invariants of point sets in projective space.

\section{Introduction}\label{INTRO}
The six-point multi-ratio equation,
\begin{equation}
\frac{(x_1-\bar{x}_2)(x_2-\bar{x}_3)(x_3-\bar{x}_1)}{(x_1-\bar{x}_3)(x_3-\bar{x}_2)(x_2-\bar{x}_1)}=1,\label{skp}
\end{equation}
is best known in the theory of integrable systems as a discrete analogue of the Kadomtsev-Petviashvili (KP) equation.
The multi-ratio form emerges naturally as the superposition principle of the Schwarzian KP equation \cite{DN}, and in the incidence-geometry approach to the KP hierarchy \cite{slg}.
It is directly related to earlier forms \cite{HirotaKP,NCWQ}.

A lattice with $A_N$ symmetry has played a central role in several contemporary developments in the theory of the discrete KP systems \cite{DDM,DAWG,ABSo,ks2}.
One reason to consider the $A_N$ lattice as the domain for the multi-ratio form (\ref{skp}), is based purely on symmetry arguments, and the purpose of this paper is to present an extension which is natural in the same sense.
The extended domain for (\ref{skp}) is encoded in the Coxeter-graph of Figure \ref{cdd}.
The situation is that sub-groups correspond to sub-domains, and for instance the $A_N$ KP lattice corresponds to the case $k=0$.
Truncations of the graph satisfying the condition \cite{rp}
\begin{equation}\label{cc}
i+j+k+1\ge ijk-1,
\end{equation}
imply the associated group is finite or affine, corresponding to inequality or equality respectively.
Such truncation is the simplest way to pick out sub-domains which support integrable dynamics in a familiar sense.

Besides the KP case, two other sub-cases of the considered domain for (\ref{skp}) are of particular significance.
In the case $k=j=1$, the system is equivalent to the discrete Schwarzian Korteweg-de Vries (KdV) system \cite{NC} in multi-dimensions \cite{NW,BS1}.
In the case $k=1$, $j=2$, $i=6$, it gives a representation of the elliptic Painlev\'e system \cite{sc,org}.
\begin{figure}[t]
\begin{center}
\begin{tikzpicture}[thick]
  \node at (4,4) [draw,circle,fill=white,minimum size=10pt,inner sep=0pt]{};
  \tikzstyle{every node}=[draw,circle,fill=black,minimum size=5pt,inner sep=0pt];
  \node (r1) at (4,2) [label={[label distance=2pt]right:$1$}]{};
  \node (r3) at (4,4) [label={[label distance=2pt]right:$k$}]{};
  \node (s1) at (5,1) [label={[label distance=2pt]below:$1$}]{};
  \node (s3) at (7,1) [label={[label distance=2pt]below:$i$}]{};
  \node (t1) at (3,1) [label={[label distance=2pt]below:$1$}]{};
  \node (t3) at (1,1) [label={[label distance=2pt]below:$j$}]{};
  \node (sig) at (4,1) [label={[label distance=2pt]below:$0$}]{};
  \draw (sig)--(r1)--(4,2.3);
  \draw (sig)--(s1)--(5.3,1);
  \draw (sig)--(t1)--(2.7,1);
  \draw (4,2.3)[dashed]--(4,3.6);
  \draw (r3)[solid]--(4,3.6);
  \draw (5.3,1)[dashed]--(6.6,1);
  \draw (s3)[solid]--(6.6,1);
  \draw (2.74,1)[dashed]--(1.2,1);
  \draw (1.3,1)[solid]--(t3);
\end{tikzpicture}
\end{center}
\caption{Coxeter-graph with a ringed node \cite{rp}. The graph encodes a particular figure, the ringed node distinguishes it from other figures having the same symmetry.}
\label{cdd}
\end{figure}

These cases establish the basic significance of the considered domain: within it, the prototypes of integrable systems in one, two and three dimensions, fit together on an equal footing.
It is therefore an example that allows to understand a natural unity between the dimensions encoded in a generalised lattice geometry.
It is complementary to the usual hierarchical view; instead of placing the lower dimensional integrable systems as reductions of those in higher dimension, they are seen as co-operating parts of a larger whole.

This extends the similar relation between quadrirational Yang-Baxter maps (KdV-type dynamics) and Painlev\'e equations \cite{ib,ay} to also include the KP case.

The canonical initial-value-problem for equation (\ref{skp}) on the introduced domain with $k=1$, formulated as a birational group, constitutes a representation of Coble's group of Cremona transformations associated with invariants of general point sets in projective space \cite{COBI,COBII}.
This group is the remarkable object, characterised geometrically, that takes a central role in the geometric framework of the elliptic Painlev\'e equation \cite{sc,10E9,kmnoy}.
The representation which emerges here, in the form of an isotropic lattice system generalising the (multidimensional) discrete KP and KdV lattices, establishes a rather direct contact between the infinite-degree-of-freedom integrable systems and the geometric framework of the Painlev\'e equations.

The structure of the paper is as follows.
In Section \ref{LG} the domain and a canonical initial-value-problem are defined in terms of the T-shaped Coxeter graph with distinguished node.
In Section \ref{BGC}, two local properties are singled-out as sufficient for the global consistency.
One is a well-known property connected with Desargues' theorem in the geometric setting for (\ref{skp}).
The other is discussed in Section \ref{MI}, it can be understood naturally by interpretation in the M\"obius group, which establishes also a broader context of Caley-Bezout-type correspondences.
The symmetry properties are the original motivation to extend previously known domains for (\ref{skp}), this is described in Section \ref{ISOTROPY} under the title {\it Isotropy}.
Identification with the discrete KP system in the $A_N$ case is made in Section \ref{KP}, and the $D_N$ case is connected with the KdV equation in Section \ref{KDV}, this involves a change of coordinates, and in the latter case also a partial integration.
In Section \ref{ECS}, connection with Coble's Cremona group is established by a change of variables.

\section{Definition of the domain}\label{LG}
The T-shaped graph in Figure \ref{cdd} was considered by Coxeter in \cite{rp} (see Chapter $\mathrm{XI}$ as well as the epilogue), it emerges from the framework allowing a unified description of the Gosset polytopes and tessellations \cite{GP}.
In that setting, the graph encodes a figure, with vertices, edges, faces, and so on.
Coxeter's representation is a standard one for this class of figures, which are also known, through abbreviation of the graph, as the $k_{ij}$ polytopes and honeycombs.

Elements of the domain for (\ref{skp}) will be defined no differently than the elements of the figure, the association between the two is as follows:
\begin{equation}
\begin{split}
{\textrm {variables}} \ &\leftrightarrow \ {\textrm {vertices}},\\
{\textrm {pair-triples}} \ &\leftrightarrow \ {\textrm {octahedral cells}}.
\end{split}
\label{fig}
\end{equation}
The pair-triple
\begin{equation}
\{\{x_1,\bar{x}_1\},
\{x_2,\bar{x}_2\},
\{x_3,\bar{x}_3\}\}, \label{loosevar}
\end{equation}
is the natural stencil for (\ref{skp}) because it shares its invariance under permutations of the variables.
\begin{definition}\label{domain}
Let integers $i,j$ be positive, and $k$ be non-negative.
Denote by $G$ the group defined by its presentation in terms of generators 
\begin{equation}
S:=\{r_0,\ldots,r_k,s_1,\ldots,s_j,t_1,\ldots,t_i\}\label{generators}
\end{equation}
and relations encoded in Figure \ref{cdd}, where generators are identified with nodes as follows:
\begin{equation}\label{dia}
\begin{array}{cccccccccccccc}
&&&&&& r_k\\
&&&&&& \vert\\
&&&&&& \vdots\\
&&&&&& \vert\\
&&&&&& r_1\\
&&&&&& \vert\\
s_j&-&\cdots&-&s_1&-&r_0&-&t_1&-&\cdots&-&t_i\\
\end{array}
\end{equation}
Thus, elements of $S$ are involutions, and any pair of them, $g,h$, commute, unless they are connected by an edge of the graph, in which case they satisfy the braid relation $ghg=hgh$.
Let $H$ denote the subgroup generated by $S\setminus\{r_k\}$, corresponding to deletion of the distinguished node in Figure \ref{cdd}.
The variables of the domain are assigned to the left cosets of $H$:
\begin{equation}
{\textrm{ variables}} \quad \leftrightarrow \quad gH : g\in G.
\end{equation}
A basic pair-triple is formed by the action of subgroup $\langle s_1,r_0,t_1 \rangle$ on the coset $X=r_0\cdots r_k H$,
\begin{equation}\label{bpt}
\{\{r_0X,r_0s_1t_1X\},\{ s_1X,t_1X\},\{ X,s_1t_1X\}\}.
\end{equation}
The remaining pair-triples are obtained from (\ref{bpt}) by the action of $G$:
\begin{equation}\label{pts}
{\textrm{pair-triples}} \quad \leftrightarrow \quad \{\{gr_0X,gr_0s_1t_1X\},\{ gs_1X,gt_1X\},\{ gX,gs_1t_1X\}\}: g\in G.
\end{equation}
\end{definition}
\begin{remark}
The domain has been specified in a combinatorial way, as a set of variables arranged into pair-triples.
Coxeter's framework allows to identify these with vertices and octahedral cells of a figure which corresponds to a polytope when the associated group is finite.
In that case, the number $N=i+j+k+1$ gives the dimension of the Euclidean space in which the polytope is found.
Even when this condition is not met, it is useful to maintain the intuitive understanding of this domain as an abstract polytope, generalising an $N$-dimensional cube or simplex, but this has the caveat of there being three natural ways in which to increment dimension, corresponding to the triple of integers $(i,j,k)$.
\end{remark}
Of fundamental significance in consideration of this domain, is the initial data set.
\begin{definition}\label{ivp}
The canonical initial data set associated with the pair-triple arrangement of Definition \ref{domain} is the subset of $i+j+k+ij+2$ variables corresponding to the cosets
\begin{equation}
\begin{split}
Y_n = r_n\cdots r_0X, & \quad n\in\{0,\ldots,k\},\\
Y_{mn} = t_m\cdots t_1 s_n\cdots s_1 X, & \quad m\in\{0,\ldots,i\},\ n\in\{0,\ldots,j\},
\end{split}\label{ivpcosets}
\end{equation}
where $X=Y_{00}=r_0\cdots r_kH$.
(To clarify the zero index case, $Y_{m0}=t_m\cdots t_1 X$ and $Y_{0n}=s_n\cdots s_1 X$.)
\end{definition}
\begin{remark}\label{sivp}
This initial data set is highly symmetric: it is an orbit of the subgroup generated by $S\setminus \{r_0\}$.
It constitutes a third relevant sub-figure extending the associations (\ref{fig}).
This is called a compound figure because the associated Coxeter-graph is not connected, it is obtained by deleting the node labelled $0$ in Figure \ref{cdd}.
\end{remark}
It is the first intention to demonstrate that Definition \ref{ivp} does indeed give a consistent initial-value-problem for equation (\ref{skp}).
To proceed, some further definitions are useful.
The analysis benefits by considering, not equation (\ref{skp}) itself, but a generic equation with the same stencil.
\begin{definition}\label{pte}
A pair-triple equation is an equation in six variables (\ref{loosevar}), which is invariant under permutations of the variables that preserve their partition into pairs indicated in (\ref{loosevar}), and which determines one variable rationally in terms of the others.
\end{definition}
Consistency of the initial-value-problem will mean the following.
\begin{definition}\label{consistent_ivp}
A pair-triple equation (Definition \ref{pte}) is called consistent when imposed on the domain in Definition \ref{domain}, if variables from the initial data set given in Definition \ref{ivp} are unconstrained, and all remaining variables are determined as composed rational functions of them.
\end{definition}
These definitions establish the setting in which local conditions for the global consistency will be considered.

\section{Birational group and consistency}\label{BGC}
In this section the initial-value-problem (Definition \ref{ivp}) for a generic pair-triple equation (Definition \ref{pte}) is formulated as a birational group.
This formulation is used to establish basic conditions for the consistency (Definition \ref{consistent_ivp}).
The analysis is formal (in essence, combinatorial), singularities of the actions are not considered.

It is natural to think of the initial data set, i.e., those variables of Definition \ref{domain} associated with cosets (\ref{ivpcosets}), as being arranged into arrays:
\begin{equation}
\left[\begin{array}{c}
y_0\\
y_1\\
\vdots\\
y_k
\end{array}\right],
\quad
\left[\begin{array}{cccc}
y_{00} & y_{10} & \cdots & y_{i0}\\
y_{01} & y_{11} & \cdots & y_{i1}\\
\vdots & \vdots & & \vdots \\
y_{0j} & y_{1j} & \cdots & y_{ij}
\end{array}\right].\label{Xdef}
\end{equation}
Here $y_n$ and $y_{mn}$ denote variables associated with cosets $Y_n$ and $Y_{mn}$, respectively, from (\ref{ivpcosets}).
\begin{remark}\label{unbounded}
The initial data array is finite whenever $i,j$ and $k$ are all finite.
The dynamical systems with an infinite number of degrees of freedom correspond to the case when at least one of these integers is considered to be infinite, and in that case the corresponding arrays will have unbounded extent.
\end{remark}
The actions are defined on these arrays.
\begin{definition}\label{actions}
Actions on variables (\ref{Xdef}) to be associated with generators, i.e., elements of $S$ (\ref{generators}), are as follows: 
\begin{equation}\label{rational_actions}
\begin{array}{rll}
r_n: & y_{n-1} \leftrightarrow y_{n}, & n\in\{1,\ldots,k\},\\
t_m: & y_{(m-1)n} \leftrightarrow y_{mn}, & m \in \{1,\ldots,i\},\  n\in\{0,\ldots,j\},\\
s_n: & y_{m(n-1)} \leftrightarrow y_{mn}, & m \in \{0,\ldots,i\},\  n\in\{1,\ldots,j\},\\
r_0: & y_0\leftrightarrow y_{00}, \ y_{mn}\rightarrow \bar{y}_{mn}, \ & m\in\{1,\ldots,i\},\  n\in\{1,\ldots,j\},
\end{array}
\end{equation}
where trivial actions are omitted, and $\bar{y}_{mn}$ is determined by a pair-triple equation (Definition \ref{pte}) imposed on the variables
\begin{equation}\label{eqpt}
\{\{y_0,\bar{y}_{mn}\},\{y_{0n},y_{m0}\},\{y_{00},y_{mn}\}\}.
\end{equation}
\end{definition}
The consistency is characterised in terms of these actions.
\begin{lemma}\label{consistent_actions}
A pair-triple equation (Definition \ref{pte}) is consistent in the sense of Definition \ref{consistent_ivp} if and only if the associated actions given in Definition \ref{actions} satisfy the relations encoded in the Coxeter-graph (\ref{dia}).
\end{lemma}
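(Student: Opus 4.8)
The plan is to establish the equivalence of Lemma~\ref{consistent_actions} by a two-directional argument that passes through the \emph{group action} picture introduced in the setup. The key conceptual point is that consistency of the initial-value-problem (Definition~\ref{consistent_ivp}) is precisely the statement that the assignment $g \mapsto (\textrm{action of } g \textrm{ on the initial data array})$ is a well-defined action of $G$ — i.e.\ that the actions of Definition~\ref{actions} extend the defining generators $S$ to a homomorphism $G \to \operatorname{Bir}(\textrm{array})$ — together with the observation that under such an action the $G$-orbit of the coset $X$ recovers all variables of the domain as composed rational functions of the initial data. So the proof is really: \emph{the actions respect the Coxeter relations $\iff$ they define a $G$-action $\iff$ the full domain is filled in consistently from the initial data}.

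For the easier direction (relations $\Rightarrow$ consistency): if the actions of Definition~\ref{actions} satisfy the braid and commutation relations of (\ref{dia}), then by the universal property of the presentation they assemble into a genuine action of $G$ on the field of rational functions in the initial-data variables. One then checks that the initial data variables themselves, being the array entries, are algebraically independent (hence unconstrained), and that every coset $gH$ — every variable of the domain — is sent by this action to a well-defined element of that rational function field, because $G$ acts transitively on the cosets and the action of each generator is rational. Finally one must verify that the pair-triple equation (\ref{skp}) is actually satisfied on every pair-triple (\ref{pts}): by construction it holds on the basic one via the defining relation for $\bar y_{mn}$ in (\ref{eqpt}), and $G$-equivariance of the pair-triple arrangement (\ref{pts}) propagates it to all the others. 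This gives Definition~\ref{consistent_ivp}.

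For the converse (consistency $\Rightarrow$ relations): suppose the pair-triple equation is consistent. Then each variable $gX$ of the domain is a well-defined rational function of the initial data, independent of which word in $S$ one uses to reach the coset $gH$ from $X$. Evaluating the generator actions of Definition~\ref{actions} against words that traverse a single Coxeter relation — e.g.\ comparing the variable reached by $r_n r_{n+1} r_n$ versus $r_{n+1} r_n r_{n+1}$, or $r_0 t_1 \cdots$ against $t_1 r_0 \cdots$ where the generators commute — forces the two composed rational maps to agree on the initial-data field, which is exactly the corresponding relation holding at the level of birational maps. Concretely, one shows that the generator actions restrict correctly to the initial data sub-array (Remark~\ref{sivp}) so that each relation of (\ref{dia}) can be tested on a finite sub-configuration, and well-definedness of the domain variables on that sub-configuration is equivalent to the relation.

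The main obstacle I expect is the converse direction, and specifically pinning down that each Coxeter relation can be detected \emph{locally} — on a bounded sub-array of initial data — rather than requiring information about arbitrarily distant parts of the (possibly infinite, cf.\ Remark~\ref{unbounded}) domain. The commutation relations are essentially immediate since the corresponding generators act on disjoint blocks of the array, and the braid relations among the $r_n$ and among the $t_m$ (or $s_n$) are transpositions on a row or column, for which the $S_3$-type braid relation is trivial; the genuinely substantive case is the braid relation(s) involving $r_0$, where the action mixes the scalar column $(y_n)$ with the grid $(y_{mn})$ and introduces the $\bar y_{mn}$ defined implicitly by the pair-triple equation. There one must show that the relation $r_0 t_1 r_0 = t_1 r_0 t_1$ (and $r_0 s_1 r_0 = s_1 r_0 s_1$) at the level of these birational maps is exactly the condition for the $2\times 2$ (in the appropriate indices) sub-problem to close up consistently — and it is here that the local properties singled out in Section~\ref{BGC} (the Desargues-type property and its Möbius-group counterpart of Section~\ref{MI}) will do the real work in the subsequent analysis, though at the level of this lemma one only needs the bookkeeping that reduces global consistency to these finitely many relation-checks.
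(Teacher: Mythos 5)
Your proposal is correct and follows essentially the same route as the paper's proof: both directions pass through identifying the rational actions of Definition \ref{actions} with the induced (coset) actions of the generators on the initial data, invoking the universal property of the presentation for the direction relations $\Rightarrow$ consistency, using transitivity of $G$ on pair-triples to propagate the equation, and deriving the relations from well-definedness of the induced $G$-action in the converse. The paper isolates the one needed computation as the compatibility $g(w(Y))=w(gY)$ for $g\in S$, which your ``restrict correctly to the initial data sub-array'' step captures.
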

\begin{proof}
The main consideration here is the group of symmetries of the domain (Definition \ref{domain}), i.e., the permutations of the variables that also permute the pair-triples, corresponding to the left action of $G$ on the left cosets of $H$.
The proof involves two complementary constructive notions, corresponding to whether the second statement of the lemma is deduced from the first, or the converse.

To be convinced of the details, in both constructive parts, requires consideration of the set of equations
\begin{equation}\label{key}
g(w(Y))=w(gY), \quad Y\in\{Y_0,\ldots,Y_k,Y_{00},\ldots,Y_{ij}\}.
\end{equation}
Here $w(Y)$ denotes the variable associated with coset $Y$ in Definition \ref{domain}, so, in particular, $y_n=w(Y_n)$ and $y_{mn}=w(Y_{mn})$ for variables (\ref{Xdef}) associated with cosets (\ref{ivpcosets}).
What is meant by $g$ in (\ref{key}) is distinguished by context.
On the left-hand-side, $g$ denotes the action associated with elements of $S$ (\ref{generators}), that have been specified in Definition \ref{actions}. 
Whereas, on the right-hand-side, $g$ is considered as an element of the group $G$ defined by its presentation in Definition \ref{domain}.
The calculation required, is to confirm that system (\ref{key}) holds for the generators, i.e., when $g\in S$.
For this calculation, it is convenient to observe that the pair-triple (\ref{eqpt}) corresponds to the coset pair-triple obtained from the basic one (\ref{bpt}) by the left action of the particular group element $g=t_m\cdots t_1s_n\cdots s_1$.
This is the only calculation needed for the proof. 
It remains to describe the two constructions that have been mentioned.

The first construction is the induced birational representation of $G$, which relies on assuming the pair-triple equation is consistent in the sense of Definition \ref{consistent_ivp}.
A given symmetry $g\in G$ permutes the variables of the domain, in particular, sending variables of the initial data set to some similar subset of variables in the domain.
The consistency implies that the new variables, to which variables of the initial data set are mapped by $g$, can themselves be expressed uniquely in terms of the variables from the initial data set.
This determines an action on variables of the initial data set, i.e., the array (\ref{Xdef}), which is called the induced action of $g$.
The previous consideration of the system (\ref{key}), confirms that the induced actions of the generators are exactly those specified in Definition \ref{actions}.
By construction, these actions are generators for a representation of $G$, and, in particular, they satisfy the desired relations (\ref{dia}).

The second constructive notion assumes that actions in Definition \ref{actions} satisfy the desired relations (\ref{dia}).
Recall that the variables of the domain are originally defined by the left action of $G$ on the coset $H$ (Definition \ref{domain}).
Therefore, through associations (\ref{key}), the iterated actions of the generators will consistently determine all remaining variables in the domain as composed rational functions of those in (\ref{Xdef}).
And because $G$ also acts transitively on the pair-triples (Definition \ref{domain}), the consistency of the system (\ref{key}) for generators, which has been verified by calculation, means that the imposed equation (Definition \ref{pte}) holds on {\it all} pair-triples of the domain.
In other words, the pair-triple equation is consistent in the sense of Definition \ref{consistent_ivp}, and the proof is complete.
\end{proof}
A consequence of the symmetric form of the initial data set (cf. Remark \ref{sivp}) is that all but one of the actions (Definition \ref{actions}) are permutations.
This simple form of the acquired actions makes it straightforward to establish the fundamental consistency result:
\begin{thm}\label{fc}
A pair-triple equation (Definition \ref{pte}) is consistent in the sense of Definition \ref{consistent_ivp} for any positive integers, $i,j$, and non-negative integer, $k$, as a consequence of the particular cases, $(i,j,k)=(1,1,1)$, and $(i,j,k)=(2,1,0)$.
See Figure \ref{consistency_figures}.
\end{thm}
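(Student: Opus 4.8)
The plan is to apply Lemma~\ref{consistent_actions} and verify, for arbitrary $(i,j,k)$, that the actions of Definition~\ref{actions} satisfy the relations encoded in (\ref{dia}), using only that they do so in the two cases $(i,j,k)=(1,1,1)$ and $(2,1,0)$. The organising point is the one just noted: every generator other than $r_0$ acts as an adjacent transposition, the $r_n$ permuting the list $y_0,\ldots,y_k$, the $s_n$ permuting the rows and the $t_m$ the columns of the matrix in (\ref{Xdef}). These three permutation actions have pairwise disjoint support, so they generate a faithful action of $S_{k+1}\times S_{j+1}\times S_{i+1}$, and every relation carried by the sub-diagram of (\ref{dia}) obtained by deleting $r_0$ — a disjoint union of three $A$-type chains — holds automatically. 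It therefore remains only to deal with the relations in which $r_0$ participates.

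I would first clear away the relation $r_0^2=\id$ and the commuting relations between $r_0$ and a generator not adjacent to it. The involution relation holds by the permutation symmetry of the pair-triple equation: a second application of $r_0$ returns $y_{mn}$ to the solution, for the unknown $y_{mn}$, of the equation imposed on the triple obtained from (\ref{eqpt}) by interchanging the pairs $\{y_0,\bar y_{mn}\}$ and $\{y_{00},y_{mn}\}$ — which is (\ref{eqpt}) again. For the commuting relations, $r_0$ and $r_n$ with $n\ge 2$ have disjoint support; and $r_0$ commutes with $s_n$ and with $t_n$ for $n\ge 2$ because the rule (\ref{eqpt}) defining the action of $r_0$ on a matrix entry refers, besides that entry and $y_0$, only to its companions in the zeroth row and column and to $y_{00}$, so that transposing two rows (or columns) other than the zeroth simply carries this data along. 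This is a direct check, uniform in $(i,j,k)$.

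What remains are the three braid relations $r_0gr_0=gr_0g$ with $g\in\{r_1,s_1,t_1\}$, and this is where the base cases enter, by way of localisation. Evaluating the word $r_0gr_0(gr_0g)^{-1}$ at a single entry of the array (\ref{Xdef}) involves only a bounded sub-array: for $g=r_1$ at the entry in column $m$, row $n$, this is spanned by columns $0,m$, rows $0,n$ and the two entries $y_0,y_1$; for $g=t_1$ it is spanned by columns $0,1,m$, rows $0,n$ and $y_0$; and symmetrically for $g=s_1$. In each case the sub-array is closed under the two relevant actions, and these restrict — after relabelling — to the actions of Definition~\ref{actions} for the case $(1,1,1)$ (when $g=r_1$), or for the case $(2,1,0)$, respectively its image under the $i\leftrightarrow j$ transposition symmetry of the construction (when $g=t_1$, respectively $g=s_1$). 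Since the braid relation in question holds in the pertinent base case — by Lemma~\ref{consistent_actions} together with the assumed consistency there — it holds on the sub-array, hence at the chosen entry; as the entry was arbitrary, the braid relation holds in full. With all relations of (\ref{dia}) confirmed, Lemma~\ref{consistent_actions} yields the consistency for every $(i,j,k)$.

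The main obstacle will be this localisation: one must check with care that each indicated sub-array is genuinely invariant under $r_0$ and the relevant neighbour, so that restriction is meaningful, and that the restricted $r_0$ coincides entry by entry with the $r_0$ of the small case — here the precise form of (\ref{eqpt}), and in particular that it reaches nothing outside the zeroth row and column, is exactly what is needed. The $i\leftrightarrow j$ symmetry invoked for the $s_1$ braid is itself immediate: transposing the array and reading the unordered pair $\{y_{0n},y_{m0}\}$ the other way round leaves (\ref{eqpt}) unchanged, so it conjugates the $(i,j,k)$ actions to the $(j,i,k)$ actions with $s_n$ and $t_m$ interchanged. Everything else is routine once the permutation nature of the generators other than $r_0$ is used.
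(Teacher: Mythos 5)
Your proposal is correct and follows essentially the same route as the paper: reduce via Lemma~\ref{consistent_actions} to verifying the Coxeter relations, dispose of everything except the three braid relations with $r_0$ by the permutation/symmetry structure, and then derive each braid relation from the $(1,1,1)$ or $(2,1,0)$ base case, using the $i\leftrightarrow j$ diagram automorphism for the third. The only (cosmetic) difference is that you localise each array entry to an invariant sub-array that is a relabelled copy of the base configuration, whereas the paper equivalently spreads the fixed base block over the whole array by conjugating with the generators that commute with the braid word.
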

\begin{proof}
According to Lemma \ref{consistent_actions}, the global consistency is demonstrated by showing that actions (\ref{rational_actions}) satisfy the relations encoded in the Coxeter-graph (\ref{dia}).
Relations not involving generator $r_0$ are inherently satisfied.
Commutativity of the actions $r_2,\ldots,r_k$, $s_2,\ldots,s_j$, and $t_2,\ldots,t_i$, with $r_0$, rely only on the assumed pair-triple symmetry of the imposed equation.
Similarly, relation $r_0^2=\id$ is a consequence of the assumed symmetry.
It therefore remains to consider the relations $(r_0s_1)^3=(r_0t_1)^3=(r_0r_1)^3=\id$.
\begin{figure}
\begin{center}
\begin{tikzpicture}[thick]
\node at (1,1) [draw,circle,fill=white,minimum size=10pt,inner sep=0pt]{};
\node at (6,0) [draw,circle,fill=white,minimum size=10pt,inner sep=0pt]{};
\tikzstyle{every node}=[draw,circle,fill=black,minimum size=5pt,inner sep=0pt];
\draw (5,0) node{} -- (6,0) node{} -- (7,0) node{} -- (8,0) node{};
\draw (0,0) node{} -- (1,0) node{} -- (2,0) node{};
\draw (1,0) -- (1,1) node{};
\end{tikzpicture}
\end{center}
\caption{Coxeter-graphs which encode domains associated with the essential $(1,1,1)$ and $(2,1,0)$ self-consistency properties.}
\label{consistency_figures}
\end{figure}

Consider first the mapping $(r_0r_1)^3$.
It is clear that it inherently fixes all entries of the array on the left in (\ref{Xdef}), and so it remains to consider action on the array on the right.
Assuming all relations hold in the case $(i,j,k)=(1,1,1)$ implies that $(r_0r_1)^3$ fixes the $2 \times 2$ sub-array at the intersection of the first two rows and columns.
Because $s_2,\ldots,s_j$, and $t_2,\ldots,t_i$ inherently commute with $r_0$ and $r_1$, they also commute with $(r_0r_1)^3$. 
But $s_2,\ldots,s_j$, $t_2,\ldots,t_i$ permute the remaining entries of the array with entries on row $2$ and column $2$, and this implies that if $(r_0r_1)^3$ fixes entries of the $2\times 2$ sub-array, it must fix all remaining entries as well.
Thus $(r_0r_1)^3=\id$ in general.

Similar arguments apply to the composed mapping $(r_0s_1)^3$.
First, this composed mapping inherently fixes entries of the array on the left in (\ref{Xdef}).
Second, assuming all relations hold when $(i,j,k)=(2,1,0)$, implies that $(r_0s_1)^3$ also fixes the $3\times 2$ sub-array at the intersection of the first three rows and two columns of the array on the right in (\ref{Xdef}).
Finally, the inherent commutativity between the actions $s_3,\ldots,s_j$, $t_2,\ldots,t_i$, and $(r_0s_1)^3$, implies that all remaining rows and columns of the array must also be fixed by $(r_0s_1)^3$, and therefore $(r_0s_1)^3=\id$ in general.

Clearly if relation $(r_0s_1)^3=\id$ holds for any $(i,j,k)$, it means it also holds when the value of $i$ or $j$ is increased so that $i=j$.
In that case, an additional mapping $\omega$, which fixes the array on the left in (\ref{Xdef}), but transposes the square array on the right, can also be considered.
This corresponds to the diagram automorphism present when $i=j$, it interchanges the two undistinguished branches of the graph, $\omega s_n = t_n \omega$, $n\in\{1,\ldots,i\}$, it clearly commutes with $r_1,\ldots,r_k$, and commutativity with $r_0$ follows from the assumed symmetry of the imposed equation.
Thus, conjugation by $\omega$ shows that $(r_0t_1)^3=\id$ is a consequence of $(r_0s_1)^3=\id$, and this completes the proof.
\end{proof}
Theorem \ref{fc} singles out two local properties sufficient for the global consistency.
The pair-triple arrangement and initial-value-problem associated with each is shown in Figure \ref{local_consistency}.
\begin{figure}\label{local_consistency}
\begin{equation*}
\begin{array}{l}
\{
\{{\color{gray}{x_1}},{\color{black}{x_{023}}}\},
\{{\color{gray}{x_2}},{\color{gray}{x_{013}}}\},
\{{\color{gray}{x_3}},{\color{gray}{x_{012}}}\}
\}\\
\{
\{{\color{gray}{x_0}},{\color{black}{x_{123}}}\},
\{{\color{gray}{x_2}},{\color{gray}{x_{013}}}\},
\{{\color{gray}{x_3}},{\color{gray}{x_{012}}}\}
\}\\
\{
\{{\color{gray}{x_0}},{\color{black}{x_{123}}}\},
\{{\color{gray}{x_1}},{\color{black}{x_{023}}}\},
\{{\color{gray}{x_3}},{\color{gray}{x_{012}}}\}
\}\\
\{
\{{\color{gray}{x_0}},{\color{black}{x_{123}}}\},
\{{\color{gray}{x_1}},{\color{black}{x_{023}}}\},
\{{\color{gray}{x_2}},{\color{gray}{x_{013}}}\}
\}
\end{array}
\qquad
\begin{array}{l}
\{
\{{\color{gray}{x_{12}}},{\color{black}{x_{34}}}\},
\{{\color{gray}{x_{13}}},{\color{gray}{x_{24}}}\},
\{{\color{gray}{x_{14}}},{\color{gray}{x_{23}}}\}
\}\\
\{
\{{\color{gray}{x_{12}}},{\color{black}{x_{35}}}\},
\{{\color{gray}{x_{13}}},{\color{gray}{x_{25}}}\},
\{{\color{gray}{x_{15}}},{\color{gray}{x_{23}}}\}
\}\\
\{
\{{\color{gray}{x_{12}}},{\color{black}{x_{45}}}\},
\{{\color{gray}{x_{14}}},{\color{gray}{x_{25}}}\},
\{{\color{gray}{x_{15}}},{\color{gray}{x_{24}}}\}
\}\\
\{
\{{\color{gray}{x_{13}}},{\color{black}{x_{45}}}\},
\{{\color{gray}{x_{14}}},{\color{black}{x_{35}}}\},
\{{\color{gray}{x_{15}}},{\color{black}{x_{34}}}\}
\}\\
\{
\{{\color{gray}{x_{23}}},{\color{black}{x_{45}}}\},
\{{\color{gray}{x_{24}}},{\color{black}{x_{35}}}\},
\{{\color{gray}{x_{25}}},{\color{black}{x_{34}}}\}
\}
\end{array}
\end{equation*}
\caption{Pair-triple arrangements of Definition \ref{domain} in the cases $(i,j,k)=(1,1,1)$ and $(i,j,k)=(2,1,0)$ respectively. 
In each case, the canonical initial data (Definition \ref{ivp}) is indicated in grey.
This allows to see by inspection, that a pair-triple equation is consistent (Definition \ref{consistent_ivp}) in each case, if it is satisfied on the last two pair-triples as a consequence of being imposed on the first two pair-triples, or the first three pair-triples, respectively.
Note that the coordinatisation used generalises to the $(i,1,1)$ and $(i,j,0)$ pair-triple arrangements respectively, cf. Lemmas \ref{demicubelemma} and \ref{2030lemma}.
}
\end{figure}
In the context of the relations (\ref{fig}), these arrangements are associated with the four-dimensional cross-polytope and the rectified four-simplex, respectively.

For a given pair-triple equation, the two local consistency properties are easily investigated by calculation, but for (\ref{skp}) there is a deeper understanding available.
For the $(2,1,0)$-consistency this is a well-studied fact inherent from the geometric interpretation \cite{slg,ks1,CSC}, and is discussed in depth in \cite{DDM,ks2}.
In the following section the $(1,1,1)$-consistency will be discussed further.

\section{Connection with M\"obius involutions}\label{MI}
Unlike the $(2,1,0)$ case, the $(1,1,1)$ consistency property of (\ref{skp}) is not well known in the context of integrable systems.
An algebraic interpretation in terms of the M\"obius group offers the desirable intuition about it.

Recall that a M\"obius involution is a mapping $x\mapsto \bar{x}$ defined by a symmetric degree-$(1,1)$ polynomial, $ax\bar{x}+b(x+\bar{x})+c = 0$, where coefficients $a,b,c\in\mathbb{C}$ satisfy the condition for irreducibility $ac\neq b^2$. 
It has the feature of transposing any given point in the extended complex plane with a unique other point.
Calling such pair of points an {\it orbit} of the transformation, the following can be verified.
\begin{prop}[\cite{atk3}]\label{alg}
(i) A freely chosen disjoint pair of orbits uniquely determines a M\"obius involution.
(ii) A M\"obius involution with the three orbits (\ref{loosevar}) exists, if and only if the points in these orbits satisfy constraint (\ref{skp}).
\end{prop}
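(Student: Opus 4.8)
The plan is to parametrise M\"obius involutions by their fixed-point structure and then translate each assertion into elementary algebra. A M\"obius involution $x\mapsto\bar x$ given by $ax\bar x+b(x+\bar x)+c=0$ is, as a map of $\mathbb{C}\cup\{\infty\}$, conjugate to $x\mapsto -x$ (equivalently $x\mapsto p^2/x$), so it has exactly two fixed points, and every orbit $\{x,\bar x\}$ is a pair harmonically separated by those fixed points. Concretely, if the involution is $\bar x=(\beta-x\alpha)/(\alpha x-\beta)\cdot(\text{const})$... more usefully: writing the map in the normalised form with fixed points $p,q$, the relation on an orbit is the cross-ratio condition $(x,\bar x;p,q)=-1$. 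So the first step is to record that an involution is determined by its (unordered) pair of fixed points, a two-parameter family, matching the two essential parameters in $(a:b:c)$ modulo the single scaling.

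For part (i): given two disjoint orbits $\{x_1,\bar x_1\}$, $\{x_2,\bar x_2\}$, I want to show there is a unique involution having both as orbits. Each orbit being harmonically separated by the (as yet unknown) fixed points $p,q$ gives two equations $(x_1,\bar x_1;p,q)=-1$ and $(x_2,\bar x_2;p,q)=-1$ in the two unknowns $\{p,q\}$. Equivalently, imposing $a x_i\bar x_i+b(x_i+\bar x_i)+c=0$ for $i=1,2$ is two linear homogeneous equations on $(a,b,c)$, whose solution space is one-dimensional unless the two rows are proportional; proportionality of the rows $(x_i\bar x_i,\ x_i+\bar x_i,\ 1)$ forces $x_1\bar x_1=x_2\bar x_2$ and $x_1+\bar x_1=x_2+\bar x_2$, i.e. $\{x_1,\bar x_1\}=\{x_2,\bar x_2\}$, contradicting disjointness. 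One then checks the resulting $(a:b:c)$ satisfies the irreducibility condition $ac\ne b^2$ — this uses disjointness again, since $ac=b^2$ is exactly the degenerate case where the conic $ax\bar x+b(x+\bar x)+c=0$ becomes a pair of lines and the "involution" collapses a whole orbit to a point. Care is needed to handle orbits through $\infty$; the clean way is to phrase everything projectively (work on $\mathbb{P}^1\times\mathbb{P}^1$ with the $(1,1)$-form) so that $\infty$ is not special.

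For part (ii): given a candidate involution (determined by the first two orbits as in (i)), a third pair $\{x_3,\bar x_3\}$ is an orbit of it precisely when $a x_3\bar x_3+b(x_3+\bar x_3)+c=0$, i.e. when the $3\times3$ matrix with rows $(x_i\bar x_i,\ x_i+\bar x_i,\ 1)$, $i=1,2,3$, is singular. So the content is a determinantal identity: I would compute $\det\!\bigl[x_i\bar x_i,\ x_i+\bar x_i,\ 1\bigr]_{i=1,2,3}$ and show it factors, up to a sign, as $(x_1-\bar x_2)(x_2-\bar x_3)(x_3-\bar x_1)-(x_1-\bar x_3)(x_3-\bar x_2)(x_2-\bar x_1)$, which vanishes exactly when the multi-ratio (\ref{skp}) equals $1$. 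That this determinant equals the multi-ratio numerator-minus-denominator is the crux, and I expect it to be the main obstacle — not because it is deep, but because the expansion must be organised carefully (e.g. by exploiting the antisymmetry under swapping $x_i\leftrightarrow\bar x_i$, or by recognising the determinant as a specialisation of a $2\times2\times2$ hyperdeterminant) so that the factorisation is transparent rather than a brute-force cancellation of many monomials. Once that identity is in hand, both directions of (ii) follow at once, with the same projective care as in (i) to cover the cases where some point is $\infty$ or where two of the six points coincide.
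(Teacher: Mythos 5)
Your proposal is correct and follows essentially the same route as the paper: impose the orbit conditions as a homogeneous linear system on the coefficients $(a,b,c)$, get uniqueness for two disjoint orbits from non-proportionality of the rows $(x_i\bar{x}_i,\ x_i+\bar{x}_i,\ 1)$, and characterise the third orbit by the vanishing of the $3\times 3$ determinant (\ref{skp2}), which is then identified with the multi-ratio equation (\ref{skp}). The determinantal identity you flag as the crux does hold (the determinant equals, up to sign, the difference of the two triple products), and the paper likewise asserts this equivalence without writing out the expansion.
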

\begin{proof}
This follows by considering the linear system for the coefficients of the M\"obius involution.
It leads to a constraint on three orbits in a determinant form,
\begin{equation}\label{skp2}
\left|
\begin{array}{ccc}
x_1\bar{x}_1 & x_1+\bar{x}_1 & 1\\
x_2\bar{x}_2 & x_2+\bar{x}_2 & 1\\
x_3\bar{x}_3 & x_3+\bar{x}_3 & 1
\end{array}
\right| = 0,
\end{equation}
which is equivalent to (\ref{skp}).
Note that, in this form, the pair-triple symmetry of the equation is visible.
\end{proof}
More generally, consider the constraint on $N$ orbits of a M\"obius involution for any $N>2$.
It can be written as $N(N-1)(N-2)/6$ distinct 3-orbit constraints like (\ref{skp2}), but it is clear that all must be satisfied as a consequence of just $N-2$ of them.
Identifying $N=k+3$, this is equivalent to the consistency property corresponding to case $i=j=1$ of Definition \ref{consistent_ivp}, whose associated figure is the cross-polytope in $N$ dimensions.
In particular, the $(1,1,1)$-consistency of (\ref{skp}) is associated with the constraint on four orbits of a M\"obius involution,
\begin{equation}
\{\{x_1,\bar{x}_1\},
\{x_2,\bar{x}_2\},
\{x_3,\bar{x}_3\},
\{x_4,\bar{x}_4\}\}.
\end{equation}
\begin{remark}
The connection between equation (\ref{skp}) and the Clifford configuration \cite{ks1,ks} has led to the conformal version of the Desargues theorem in relation to the $(2,1,0)$-consistency \cite{ks2}. 
Clifford configurations are discussed in terms of M\"obius involutions in \cite{Rigby}, in that context the M\"obius involutions are called {\it half-turns of the inversive plane}, and points of an orbit have been called {\it mates}.
Although a $C4$ Clifford configuration determines four orbits of a M\"obius involution, they are not generic orbits, indeed, only three orbits are needed to determine a unique $C4$ Clifford configuration.
Therefore, despite having the same combinatorics, the $(1,1,1)$-consistency of (\ref{skp}) is not itself directly equivalent to existence of the $C4$ Clifford configuration.
It is logical that this can be resolved in a suitably generalised geometric setting.
\end{remark}
\begin{remark}
The algebraic interpretation in terms of the M\"obius involutions identifies (\ref{skp}) as the simplest case ($n=2$) in a class of systems related to the Caley-Bezout-type correspondences.
These are correspondences $x\mapsto \bar{x}$ defined by a polynomial equation $h(x,\bar{x})=0$ where
\begin{equation}\label{cbe}
h(x,\bar{x}):=\frac{r_1(x)r_2(\bar{x})-r_1(\bar{x})r_2(x)}{x-\bar{x}}.
\end{equation}
Here $r_1$ and $r_2$ are polynomials of degree $n$ with no common roots, the correspondence is $(n-1)$-valued, but its orbits inherently close, forming $n$-tuples.
The roots of $r_1$ and $r_2$ constitute two of the orbits, which may therefore be chosen freely, but there is a constraint on any set of three orbits.
The present work relates to the case $n=2$, and as such is a prequel to \cite{ib,ay} which investigated the $n=3$ family of correspondences.
That work was in fact restricted to members of the family that share a single orbit, furthermore the next case, $n=4$, is the highest value of $n$ for which the three-orbit constraint is rational, and its significance is un-known at this time.
\end{remark}

\section{Isotropy}\label{ISOTROPY}
The purpose of this section is to establish a completeness property for the introduced domain, it is a combinatorial notion based on symmetry introduced in \cite{ib}.
This property is not connected to integrability, instead it is a notion to help decide if the domain of a system has been prematurely truncated.

Recall that in general the domain of a pair-triple equation (Definition \ref{pte}) is considered as a set of variables arranged into pair-triples, and a symmetry of the domain is a permutation of the variables that also permutes the pair-triples.
In the present situation the symmetry considerations single out the more specific case that $i=j$.
\begin{prop}\label{sc}
When $i=j$, the pair-triple arrangement of Definition \ref{domain} is symmetry-complete.
That is, every symmetry of every pair-triple is the restriction of a symmetry of the whole arrangement.
\end{prop}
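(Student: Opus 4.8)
The plan is to use that $G$ acts transitively on the pair-triples to reduce to a single one, and then to exhibit enough explicit symmetries of the domain to generate its whole symmetry group. Every pair-triple of the arrangement is a $G$-translate of the basic one $Q_0$ of (\ref{bpt}), and, as noted in the proof of Lemma \ref{consistent_actions}, left multiplication by the elements of $G$ gives symmetries of the arrangement. Hence, given a pair-triple $Q=gQ_0$ and a symmetry $\gamma$ of $Q$, the map $y\mapsto g^{-1}\gamma(gy)$ is a symmetry of $Q_0$, and if it extends to a symmetry $\tilde\gamma$ of the whole arrangement then $g\tilde\gamma g^{-1}$ is a symmetry of the arrangement restricting to $\gamma$ on $Q$; so it is enough to treat $Q_0$. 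Now the symmetries of a single pair-triple form the hyperoctahedral group $B_3=S_2\wr S_3$ of order $48$ --- permutations of the three constituent pairs, composed with the swap within each --- which is the symmetry group of the octahedral cell the pair-triple represents. Inside $B_3$, any subgroup that maps onto the quotient $S_3$ (the permutations of the three pairs) and contains at least one within-a-pair swap is the whole of $B_3$: conjugating that swap by the $S_3$ carries it to the swaps in the other two pairs, and those three swaps generate the normal subgroup $S_2^3$.

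The full $S_3$ of pair-permutations I would obtain by left multiplication by the parabolic subgroup $K=\langle s_1,r_0,t_1\rangle$; since the induced diagram is of type $A_3$, one has $K\cong S_4$. Write $\rho=s_1t_1$, an involution because $s_1$ and $t_1$ commute; with $X=r_0\cdots r_kH$ the three pairs of $Q_0$ are $\langle\rho\rangle X$, $s_1\langle\rho\rangle X$ and $r_0\langle\rho\rangle X$. Left multiplication by $k\in K$ fixes the underlying six-element set $KX$ of $Q_0$ and, I claim, preserves its partition into pairs, because those pairs are exactly the orbits of the normal Klein four-subgroup $V=\langle\rho,\ r_0\rho r_0\rangle\triangleleft K$. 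The verification this requires is that $r_0\rho r_0$ fixes $X$ whereas $\rho$ does not: the first holds because one can push the commuting involutions $s_1$ and $t_1$ rightwards past $r_1,\dots,r_k$ and absorb them into $H$, while for the second there is no way to carry them past $r_0$. Granting this, $K$ acts on the three blocks through $K/V\cong S_3$; since $K$ is transitive on $KX$, hence on the three pairs, the image in the symmetric group on the three pairs is a transitive subgroup whose kernel is a normal subgroup of $K$ containing $V$, and as $S_4$ has no normal subgroup of order $8$ this image must be the whole of $S_3$. So left multiplication by $K$ realizes every permutation of the three pairs.

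The within-a-pair swap is where $i=j$ is essential. When $i=j$ the T-graph admits the diagram automorphism $\omega$ interchanging the two undistinguished branches --- $\omega(s_n)=t_n$, $\omega(t_n)=s_n$, $\omega(r_n)=r_n$ --- already used in the proof of Theorem \ref{fc}. Because $i=j$, $\omega$ permutes the set $S\setminus\{r_k\}$ and therefore preserves $H$, so it descends to the bijection $gH\mapsto\omega(g)H$ of the variables; one checks directly that this carries the pair-triple indexed by $g$ to the one indexed by $\omega(g)$, so $\omega$ is a symmetry of the arrangement. It fixes $X$, hence fixes $Q_0$, and since $\omega$ fixes $r_0$ while interchanging $s_1$ and $t_1$ it acts on $Q_0$ by $s_1X\leftrightarrow t_1X$ with the remaining four variables fixed --- that is, precisely as a swap within one of the three pairs.

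Combining these, the restrictions to $Q_0$ of left multiplication by $K$ and of $\omega$ generate, inside $\mathrm{Sym}(Q_0)\cong B_3$, a subgroup mapping onto the $S_3$ of pair-permutations and containing a within-a-pair swap, hence all of $B_3$; so every symmetry of $Q_0$ extends, and by the reduction in the first paragraph every symmetry of every pair-triple extends. I expect the only substantive point to be the middle step: verifying that left multiplication by $K$ really does descend to the partition of $Q_0$ into pairs --- equivalently, that these three pairs are blocks for the $K$-action --- and pinning down the resulting action on them; everything else is formal group theory and octahedral bookkeeping.
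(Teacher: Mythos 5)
Your argument is correct and follows essentially the same route as the paper: reduce to the basic pair-triple by transitivity, observe that the parabolic subgroup $\langle s_1,r_0,t_1\rangle\cong S_4$ realizes only an index-two subgroup of the full octahedral ($B_3$, order $48$) symmetry, and use the diagram automorphism $\omega$ (available exactly when $i=j$) to supply the missing within-a-pair swap. The paper merely asserts that $\langle\omega\rangle\ltimes\langle s_1,r_0,t_1\rangle$ gives the complete pair-triple symmetry, so your filling in of the group theory is welcome; the one sub-step you should make explicit is that $r_0s_1r_0$ and $r_0t_1r_0$ also fix $X$ (pushing $s_1$, $t_1$ past $r_1,\dots,r_k$ into $H$, just as for $r_0\rho r_0$), so that $\mathrm{Stab}_K(X)$ has order $4$ and the orbit $KX$ really is the six-element vertex set of the basic pair-triple rather than something larger.
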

\begin{proof}
Consider Definition \ref{domain}.
Inherent from the way the arrangement is defined, only a single pair-triple needs to be considered, and it is natural to focus on the basic one (\ref{bpt}).
It is clear that subgroup $\langle s_1,r_0,t_1\rangle$ stabilizes this pair-triple, however, this subgroup does not give the full pair-triple (in other words, octahedral) symmetry.
This is rectified in the case $i=j$ when the arrangement has an additional symmetry corresponding to the diagram automorphism.
Specifically, the group $G$ is complemented by an additional generator, $\omega$, where relations involving $\omega$ are $\omega^2=(r_n\omega)^2=\id$, $n\in\{0,\ldots,k\}$, and $\omega t_n = s_n\omega$, $n\in\{1,\ldots,i\}$.
Both $G$ and $H$ are normalized by $\omega$, and therefore the action of conjugation by $\omega$ permutes the cosets of $H$ in $G$.
It is straightforward to verify that this action also permutes the pair-triples (\ref{pts}), and therefore gives a symmetry of the arrangement.
Furthermore,  this action of $\omega$ stabilizes the basic pair-triple (\ref{bpt}).
Unlike $\langle s_1,r_0,t_1\rangle$, the group $\langle \omega \rangle \ltimes \langle s_1,r_0,t_1\rangle$ gives the complete pair-triple symmetry.
\end{proof}
\begin{remark}
A domain which is symmetry-complete can be called {\it isotropic} in a sense appropriate for discrete systems.
It is the condition that, viewed from an elementary cell, the remainder of the domain looks the same in all directions.
\end{remark}
\section{The $A_N$ KP lattice}\label{KP}
In this section the $A_N$ KP lattice is discussed.
Its theory has been developed as the multidimensional extension of the face-centred-cubic lattice in \cite{DAWG,ABSo,ks2}.
Connection with instance $k=0$ of Definition \ref{domain} is established simply by changing to standard coordinates, the canonical initial-value-problem (Definition \ref{ivp}) is also re-expressed in these coordinates.
In the second part of this section, it is established that the $A_N$ lattice can be be viewed as a symmetry-complete extension of the `hypercube' domain for (\ref{skp}).
\subsection{Coordinatisation}
In the case $k=0$, the figure associated with the Coxeter graph of Figure \ref{cdd} is the $j$-rectified $N$-simplex, where $N=i+j+1$.
A natural coordinatisation is available by associating variables with the subsets of size $j+1$ in a set of size $N+1$.
\begin{definition}\label{2030}
Let $i,j$ be positive integers and $N=i+j+1$.
The pair-triple arrangement of the $j$-rectified $N$-simplex is a set of $(N+1)!/[(i+1)!(j+1)!]$ variables
\begin{equation}
x_J, \quad J\subset \{1,\ldots,N+1\},\ |J|=j+1, \label{vars}
\end{equation}
arranged into $(N+1)!/[(i-1)!(j-1)!4!]$ pair-triples,
\begin{equation}\label{pt}
\{\{x_{K\cup \{a,b\}},x_{K\cup\{c,d\}}\},
\{x_{K\cup \{a,c\}},x_{K\cup\{b,d\}}\},
\{x_{K\cup \{a,d\}},x_{K\cup\{b,c\}}\}\},
\end{equation}
where $|K|=j-1$ and $a,b,c,d\in\{1,\ldots,N+1\}\setminus K$ with $|\{a,b,c,d\}|=4$.
\end{definition}
The symmetry group of this arrangement acts naturally by free permutation of indices $1,\ldots,N+1$, which is the point of contact with the coordinatisation given in Definition \ref{domain}.
\begin{lemma}\label{2030lemma}
The pair-triple arrangement in case $k=0$ of Definition \ref{domain}, coincides with the $j$-rectified $N$-simplex arrangement (Definition \ref{2030}).
\end{lemma}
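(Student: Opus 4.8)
The plan is to construct an explicit bijection between the cosets of $H$ in $G$ (Definition \ref{domain}, with $k=0$) and the $(j+1)$-subsets of $\{1,\ldots,N+1\}$, and to check that it carries the pair-triples (\ref{pts}) to the pair-triples (\ref{pt}). When $k=0$ the generating set is $S=\{r_0,s_1,\ldots,s_j,t_1,\ldots,t_i\}$, and the Coxeter graph (\ref{dia}) is the linear diagram $s_j-\cdots-s_1-r_0-t_1-\cdots-t_i$, which is the Coxeter diagram of the symmetric group $\mathrm{Sym}(N+1)$ on $N+1$ letters with $N=i+j+1$ nodes. So the first step is to identify $G\cong\mathrm{Sym}(N+1)$ by sending the generators to the adjacent transpositions along the line; concretely, reading the line from the $s_j$ end to the $t_i$ end, the $\ell$-th node becomes the transposition $(\ell,\ell+1)$, so that $r_0$ becomes the transposition $(j+1,j+2)$. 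Under this identification, $H=\langle S\setminus\{r_k\}\rangle$; but when $k=0$ we have $r_k=r_0$, so $H$ is generated by all the $s$'s and $t$'s, i.e.\ by the transpositions $(\ell,\ell+1)$ for $\ell\neq j+1$. Hence $H\cong \mathrm{Sym}(j+1)\times\mathrm{Sym}(i+1)$, the Young subgroup stabilizing the partition $\{1,\ldots,j+1\}\sqcup\{j+2,\ldots,N+1\}$, and the coset space $G/H$ is canonically the set of $(j+1)$-subsets of $\{1,\ldots,N+1\}$ via $gH\mapsto g\cdot\{1,\ldots,j+1\}$. This gives the variable bijection (\ref{vars}), and it is manifestly equivariant for the index-permutation action, which is the ``point of contact'' mentioned after Definition \ref{2030}.

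The second step is to track where the distinguished coset $X=r_0\cdots r_k H$ goes. When $k=0$ this is just $X=r_0H$, so $X$ corresponds to the subset $r_0\cdot\{1,\ldots,j+1\}=\{1,\ldots,j,j+2\}$ (the transposition $(j+1,j+2)$ swaps $j+1$ out for $j+2$). Then I would compute the images of the six cosets appearing in the basic pair-triple (\ref{bpt}), namely $r_0X$, $r_0s_1t_1X$, $s_1X$, $t_1X$, $X$, $s_1t_1X$: applying the appropriate products of adjacent transpositions to $\{1,\ldots,j,j+2\}$ gives, after a short calculation, six $(j+1)$-subsets of the form obtained from the common $(j-1)$-set $K=\{1,\ldots,j-1\}$ by adjoining the four distinct elements $a=j$, $b=j+2$, $c=j+1$, $d=j+3$ in the three complementary-pair patterns $\{a,b\}|\{c,d\}$, $\{a,c\}|\{b,d\}$, $\{a,d\}|\{b,c\}$. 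In other words the image of (\ref{bpt}) is exactly a pair-triple of the form (\ref{pt}). Since $G\cong\mathrm{Sym}(N+1)$ acts transitively on the configurations $(K;\{a,b,c,d\})$ underlying (\ref{pt}) — choose any $(j-1)$-subset $K$ and any four further elements, then permute — and it acts transitively on the pair-triples (\ref{pts}) by construction, the full family (\ref{pts}) maps onto the full family (\ref{pt}). One then checks the count: $|G/H|=\binom{N+1}{j+1}=(N+1)!/[(i+1)!(j+1)!]$ matches (\ref{vars}), and the number of pair-triples is $\binom{N+1}{j-1}\binom{N-j+2}{4}\cdot 1$ (choose $K$, then the four-element set $\{a,b,c,d\}$; the three pair-patterns are the three entries of one pair-triple, not three separate pair-triples), which simplifies to $(N+1)!/[(i-1)!(j-1)!\,4!]$ as stated in Definition \ref{2030}.

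The third step is to verify there is no over- or under-counting in the correspondence of pair-triples: distinct pairs $(K,\{a,b,c,d\})$ must give distinct pair-triples (\ref{pt}), and the $G$-orbit of (\ref{bpt}) must have exactly the size $|G|/|\mathrm{Stab}|$ with $\mathrm{Stab}=\langle s_1,r_0,t_1\rangle$ (whose order, $24$, matches the octahedral stabilizer of a pair-triple). A pair-triple (\ref{pt}) determines its three unordered pairs of $(j+1)$-subsets, and intersecting all six subsets recovers $K$ while the symmetric difference recovers $\{a,b,c,d\}$, so the map $(K,\{a,b,c,d\})\mapsto(\ref{pt})$ is injective; combined with the matching cardinalities this forces the bijection. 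I expect the only real friction to be bookkeeping: fixing a consistent reading-direction for the linear Coxeter diagram so that $r_0\leftrightarrow(j+1,j+2)$ and $H$ is the ``middle'' Young subgroup, and then carrying out the explicit computation of the six subsets in (\ref{bpt}) carefully enough to see the pattern $K\cup\{a,b\}$ etc.\ emerge with the right pairing of variables into pairs — the pairs $\{x_1,\bar x_1\},\{x_2,\bar x_2\},\{x_3,\bar x_3\}$ of the stencil must line up with the complementary-pair structure $\{a,b\}|\{c,d\}$, $\{a,c\}|\{b,d\}$, $\{a,d\}|\{b,c\}$ of (\ref{pt}). Once that single computation is pinned down, transitivity of $\mathrm{Sym}(N+1)$ does the rest and the lemma follows.
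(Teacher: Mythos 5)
Your proof takes essentially the same route as the paper: identify the $k=0$ Coxeter group with $\mathrm{Sym}(N+1)$ by sending the nodes of the linear diagram to adjacent transpositions (so $r_0\mapsto(j+1,j+2)$, exactly the paper's assignment), recognise $H$ as the Young subgroup stabilising $\{1,\ldots,j+1\}$, and obtain the coset-to-subset bijection $gH\mapsto g\cdot\{1,\ldots,j+1\}$ — the paper then simply asserts that this gives the coordinatisation of Definition \ref{2030}, whereas you additionally verify the pair-triple correspondence and the counts explicitly, which is a welcome strengthening rather than a divergence. The only slip is the parenthetical identification of $\langle s_1,r_0,t_1\rangle$ as the full setwise stabiliser of the basic pair-triple (the full stabiliser also contains $\mathrm{Sym}(K)\times\mathrm{Sym}(\{1,\ldots,N+1\}\setminus(K\cup\{a,b,c,d\}))$, giving orbit size $(N+1)!/[(i-1)!(j-1)!\,4!]$ as required), but your actual argument via injectivity of $(K,\{a,b,c,d\})\mapsto$ pair-triple together with transitivity does not depend on that remark.
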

\begin{proof}
Consider case $k=0$ of Definition \ref{domain}.
A consistent and faithful representation of $G$ as free permutation of some set of indices $1,\ldots,N+1$ can be introduced by associating actions with the generators as follows:
\begin{equation}\label{intactions}
\begin{split}
s_n & : j+1-n\leftrightarrow j+2-n, \quad n\in\{1,\ldots,j\},\\
r_0 & : j+1\leftrightarrow j+2,\\
t_n & : j+1+n\leftrightarrow j+2+n, \quad n\in\{1,\ldots,i\}.
\end{split}
\end{equation}
In this representation, $H$ is characterised as the subgroup which stabilises the index-set $\{1,\ldots,j+1\}$.
More generally, the image of this subset by the action of a left coset of $H$ is single-valued, so there is a bijection
\begin{equation}\label{otherpart}
gH \ \leftrightarrow \ \{g(1),\ldots,g(j+1)\}, \quad g\in G.
\end{equation}
This gives the coordinatisation of Definition \ref{2030}.
\end{proof}

In the coordinates of the $j$-rectified $N$-simplex, the initial data (\ref{Xdef}), associated with cosets (\ref{ivpcosets}), corresponds to the following subset of variables (\ref{vars}):
\begin{equation}
x_{\{1,\ldots,j+1\}}=y_0, \quad 
x_{(\{1,\ldots,j+1\}\setminus\{j+1-n\})\cup\{j+2+m\}} = y_{mn},
\end{equation}
where $m\in\{0,\ldots,i\}$ and $n\in\{0,\ldots,j\}$.
\begin{remark}\label{Schief}
It is clear, from the proof of Theorem \ref{fc}, that the $(1,1,1)$-consistency of (\ref{skp}) plays no role for consistency on the $A_N$ sub-lattice.
Sufficiency of the $(2,1,0)$-consistency in this setting was established originally in \cite{ks2}.
It contrasts with the considerations in \cite{ABSo} where consistency on a larger stencil was required, but this is explained by a generalised setting involving no symmetry assumptions of the imposed equations.
Also it can be compared with the approach in \cite{DDM}, where the consistency is presented as being inherent from the axioms of projective geometry, so the properties of equations are seen as inherited.
In this paper, the question of consistency has been approached by converting it to the problem of verifying the defining relations of a birational group.
\end{remark}

\begin{remark}\label{FCClattice}
The Face-centred-cubic (FCC) lattice, on which equation (\ref{skp}) defines a discrete version of the three-dimensional KP dynamics, is embedded naturally in the limiting case of the $j$-rectified $N$-simplex domain as $j$ and $N$ tend to infinity. 
This limiting case can be called, after its automorphism group, the $A_{\infty}$ domain.
To illustrate the embedding of the FCC sub-lattice into the $A_{\infty}$ domain, a combinatorial description of it follows.

Consider Definition \ref{2030}, and, in the first instance, suppose that $i$ and $j$ are finite.
Introduce an arbitrary partition of the index-set, $\{1,\ldots,N\}=A_1\cup A_2 \cup A_3 \cup A_4$, and endow each part, $A_\mu$, $\mu\in\{1,2,3,4\}$, with an ordering.
The ordering is chosen freely for each $A_\mu$, but is henceforth fixed.
The basic restriction is to variables $x_J$ (\ref{vars}) whose index-set $J$ is also in the form of a partition $J=J_1\cup J_2 \cup J_3 \cup J_4$, where $J_\mu\subseteq A_\mu$, and, if not empty, then $J_\mu$ contains the first element of $A_\mu$, and is itself sequential.
Restricted in this way, the subset $J$ is completely determined by the four integers $a=|J_1|$, $b=|J_2|$, $c=|J_3|$ and $d=|J_4|$, which, due to the constraint $|J|=j+1$, are themselves subject to the condition $a+b+c+d=j+1$.
Replacing index-set $J$ with the four integers that determine it, the restricted subset of variables are specified as
\begin{equation}\label{fccvars}
x_{a,b,c,d}, \quad a+b+c+d=j+1.
\end{equation}
The pair-triples from (\ref{pt}) that contain wholly variables from the restricted set (\ref{fccvars}), are straightforwardly found to be
\begin{multline} \label{fccpt}
\{\{x_{a+1,b+1,c,d},x_{a,b,c+1,d+1}\},\\
\{x_{a+1,b,c+1,d},x_{a,b+1,c,d+1}\},
\{x_{a+1,b,c,d+1},x_{a,b+1,c+1,d}\}\},
\end{multline}
where $a+b+c+d=j-1$.
In (\ref{fccvars}) the integers $a,b,c$ and $d$ are non-negative, and each will not exceed the size of $A_1$, $A_2$, $A_3$ and $A_4$ respectively.
Similarly in (\ref{fccpt}), but the upper limit is set, respectively, by $|A_\mu|-1$.

Variables (\ref{fccvars}) arranged into the pair-triples (\ref{fccpt}), define, combinatorially, vertices and octahedral cells of a connected finite part of a FCC lattice.
The extent of the embedded part can always be increased by increasing the value of integers $i$ and $j$: the infinite FCC lattice is therefore embedded in the limiting case that $i$ and $j$ are incremented indefinitely.
From the point of view of this embedding, the consequence of the $(2,1,0)$-consistency (cf. Remark \ref{Schief}) is that the generic solution of equation (\ref{skp}) on the FCC lattice can always be extended to a solution on the $A_\infty$ lattice, which is known as the multidimensional extension.
\end{remark}

\subsection{Restriction to the hypercube}\label{HYPERCUBE}
An important domain for (\ref{skp}) is the hypercube of arbitrary dimension, it is natural in the context of nonlinear superposition \cite{NCWQ,DN}, where commuting B\"acklund transformations are associated with lattice edges.
\begin{definition}\label{Ncube}
For a fixed integer $M>2$, the pair-triple arrangement of the $M$-cube is a set of $2^M-2$ variables
\begin{equation}
x_{\bar{J}}, \quad \{\}\subset \bar{J} \subset \{1,\ldots,M\},\label{cubevars}
\end{equation}
arranged into $2^{M-3}M(M-1)(M-2)/6$ pair-triples
\begin{equation}\label{cubept}
\{\{x_{\bar{K}\cup\{n\}}, x_{\bar{K}\cup \{m,l\}}\},
\{x_{\bar{K}\cup\{m\}}, x_{\bar{K}\cup \{l,n\}}\},
\{x_{\bar{K}\cup\{l\}}, x_{\bar{K}\cup \{n,m\}}\}\},
\end{equation}
where $\bar{K}\subset\{1,\ldots,M\}$ with $|\bar{K}|<M-2$ and $n,m,l\in\{1,\ldots,M\}\setminus \bar{K}$ such that $|\{n,m,l\}|=3$.
\end{definition}
Each multi-index $\bar{J}$ is associated with a vertex of the hypercube, and each pair-triple is associated with a cube.
More precisely, each pair-triple is associated with a cube that has a distinguished diagonal. 
Such restriction of the cube does not have the full pair-triple symmetry, and it is therefore easily seen that the arrangement of Definition \ref{Ncube} is not symmetry-complete (recall Section \ref{ISOTROPY}).
The purpose of this sub-section is to establish that the $A_N$ lattice is a symmetry-complete extension of it:
\begin{prop}\label{Ncubereduced}
The pair-triple arrangement of the $j$-rectified $N$-simplex (Definition \ref{2030}) in case $i=j$, is a symmetry-complete extension of the arrangement of the $M$-cube (Definition \ref{Ncube}), where $i=j=M-2$ and $N=2M-3$.
\end{prop}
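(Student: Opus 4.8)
The plan is to exhibit an explicit injection of the $M$-cube variables and pair-triples into those of the $j$-rectified $N$-simplex with $i=j=M-2$ and $N=2M-3$, and then to invoke Proposition \ref{sc} to conclude symmetry-completeness. First I would set up the combinatorial embedding. Write the index-set $\{1,\ldots,N+1\}=\{1,\ldots,2M-2\}$ as a disjoint union of $M$ pairs, $P_\mu=\{\mu,\mu+M-1\}$ for $\mu\in\{1,\ldots,M-1\}$ together with one further distinguished pair, say $P_M$; more symmetrically, fix a partition into $M$ two-element blocks $P_1,\ldots,P_M$ and in each block designate one element ``low'' and the other ``high''. A subset $\bar J\subset\{1,\ldots,M\}$ of the cube is then sent to the $(j+1)$-element index-set $\phi(\bar J)\subset\{1,\ldots,2M-2\}$ obtained by taking the high element of $P_\mu$ when $\mu\in\bar J$ and the low element otherwise; since each block contributes exactly one index, $|\phi(\bar J)|=M$, and we need $M=j+1$, i.e. $j=M-1$. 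This clashes with the stated $j=M-2$, so the correct bookkeeping is to drop one block from the ``choice'' mechanism: use $M-1$ active blocks as above plus one block $P_M$ always contributing a fixed single element, giving $|\phi(\bar J)|=(M-1)+1=M$ again — the resolution is that the excluded vertices $\bar J=\{\}$ and $\bar J=\{1,\ldots,M\}$ in \eqref{cubevars} are exactly what makes the counts $2^M-2$ versus $(N+1)!/[(i+1)!(j+1)!]$ compatible once $i=j=M-2$, $N=2M-3$, so I will instead let the $M$ blocks be $M-1$ genuine binary choices over sets of size\ldots{} In any case, the first real step is to pin down this dictionary so that a cube vertex $\bar J$ corresponds bijectively to a $(j+1)$-subset of an $(N+1)$-set, and to check the cardinality identity $2^M-2=(2M-2)!/[(M-1)!(M-1)!]$\,—\,wait, that is false in general, so the embedding is genuinely a proper (non-surjective) one, and what must be checked is only that $\phi$ is \emph{injective} and that it carries cube pair-triples \eqref{cubept} to simplex pair-triples \eqref{pt}.

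The second step is the pair-triple compatibility. Take a cube pair-triple \eqref{cubept} with core $\bar K$ and free directions $n,m,l$. Under $\phi$, the six vertices $\bar K\cup\{n\}$, $\bar K\cup\{m,l\}$, etc., map to six $(j+1)$-subsets. I would verify that these six subsets are exactly of the form displayed in \eqref{pt}: there is a common $(j-1)$-element part $K$ (the $\phi$-image data coming from $\bar K$ and from the blocks not among $n,m,l$, minus two slots) and four ``movable'' indices $a,b,c,d$ corresponding to the low/high alternatives in the three active blocks $P_n,P_m,P_l$ — precisely, flipping direction $n$ on the cube corresponds to swapping low$\leftrightarrow$high in block $P_n$, and the three cube pairs $(\{n\},\{m,l\})$, $(\{m\},\{l,n\})$, $(\{l\},\{n,m\})$ match the three ways \eqref{pt} partitions $\{a,b,c,d\}$ into two pairs. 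This is the combinatorial heart of the argument and is essentially a direct unwinding of the two definitions; I expect it to go through cleanly once the dictionary is fixed, the only care needed being to confirm that $|K|=j-1$ works out (three active blocks consume three of the $j+1$ slots, leaving $j-2$, plus one extra from\ldots) — this index arithmetic is the step most likely to need a small correction to the block count, and is where I would be most careful.

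The third and final step is to deduce symmetry-completeness. By Proposition \ref{sc}, since $i=j$, the $j$-rectified $N$-simplex arrangement is itself symmetry-complete: every symmetry of every one of its pair-triples extends to a symmetry of the whole. Every pair-triple of the $M$-cube arrangement is, via $\phi$, one of the pair-triples of the simplex arrangement, so any octahedral symmetry of it already extends to a symmetry of the simplex arrangement; it remains only to note that the simplex arrangement genuinely contains the cube arrangement as a sub-arrangement (the map $\phi$ on vertices is injective and sends pair-triples to pair-triples, established in steps one and two), which is the definition of ``extension''. Hence the $j$-rectified $N$-simplex is a symmetry-complete extension of the $M$-cube, as claimed.

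The main obstacle I anticipate is not conceptual but bookkeeping: getting the partition-into-blocks dictionary calibrated so that $|\phi(\bar J)|=j+1$ with $j=M-2$ and the common core in \eqref{pt} has the correct size $j-1$. The natural first guess ($M$ binary blocks, $|\phi(\bar J)|=M$) forces $j=M-1$, not $M-2$; the fix is to let one coordinate of the cube be recorded not as a low/high choice within a block but by the \emph{presence or absence} of an index, i.e. to use $M-1$ two-element blocks contributing $M-1$ indices plus an $M$-th block $\{2M-3,2M-2\}$ contributing one index whose identity encodes the last cube coordinate — giving $|\phi(\bar J)|=M-1+1=M$, still $M$, so in fact the genuinely correct scheme uses $j+1=M-2+1=M-1$ indices, meaning only $M-1$ of the cube's $M$ coordinates are ``active'' in the block sense while the remaining one is handled by the excision of the all-zero and all-one vertices in \eqref{cubevars}; pinning this down precisely is the one place the proof could stumble, and I would resolve it by first checking the identity $2^M-2 \le (2M-2)!/[(M-1)!(M-1)!]$ and then constructing $\phi$ to realise the bound.
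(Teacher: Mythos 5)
Your overall strategy is the right one and matches the paper's: exhibit an injection of the $M$-cube variables and pair-triples into those of the simplex arrangement, then invoke Proposition \ref{sc} (which, since $i=j$, makes the simplex arrangement symmetry-complete, so any sub-arrangement realised inside it inherits the extension property). But the essential content of the proof is the explicit dictionary, and you never produce a working one; worse, the scheme you keep returning to cannot be repaired. Encoding each cube coordinate $\mu$ as a low/high choice within a two-element block $P_\mu$ is structurally incompatible with the pair-triple format \eqref{pt}: a cube pair-triple \eqref{cubept} flips \emph{three} coordinates $n,m,l$ (compare $\bar K\cup\{n\}$ with $\bar K\cup\{m,l\}$), so under a block map the six vertices would involve \emph{six} movable indices (both elements of each of $P_n,P_m,P_l$) and a common core of the wrong size, whereas \eqref{pt} allows only four movable indices $a,b,c,d$ over a core $K$ of size $j-1$. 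So even granting injectivity on vertices, your step two would fail outright. This is not mere bookkeeping.

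The construction the paper uses is different in kind: take $N+1=2M-2$ indices, let $A=\{M+1,\ldots,2M-2\}$ (size $M-2$), and encode the cube vertex $\bar J\subset\{1,\ldots,M\}$ by \emph{presence or absence} of the indices $1,\ldots,M$ themselves, padding up to the required size $j+1=M-1$ with the sequential initial segment $\{M+1,\ldots,M+(M-1-|\bar J|)\}$ of $A$. The padding is determined by its length, so $J\mapsto\bar J=J\setminus A$ is a bijection onto the nonempty proper subsets of $\{1,\ldots,M\}$ (the exclusion of $\bar J=\{\}$ and $\bar J=\{1,\ldots,M\}$ in \eqref{cubevars} is exactly what keeps the padding length in $\{0,\ldots,M-2\}$). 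Under this map a cube pair-triple with core $\bar K$ and directions $n,m,l$ lands on the simplex pair-triple with $K=\bar K\cup\{M+1,\ldots,M+|K\cap A|\}$ and $\{a,b,c,d\}=\{n,m,l\}\cup\{M+|K\cap A|+1\}$: exactly one of the four movable indices lies in $A$, and the three pairings of $\{a,b,c,d\}$ in \eqref{pt} reproduce the three pairings $(\{n\},\{m,l\})$, $(\{m\},\{l,n\})$, $(\{l\},\{n,m\})$ of \eqref{cubept}. Your closing paragraph gestures toward ``presence or absence of an index'' as the fix, but you do not carry it out, and without it the proof is incomplete.
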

\begin{proof}
Due to Proposition \ref{sc}, it is sufficient to show that the case $i=j$ of the $j$-rectified $N$-simplex arrangement (Definition \ref{2030}) contains a subset of variables and pair-triples which form an $M$-cube arrangement (Definition \ref{Ncube}) with $M=j+2$.
Let $i=j=M-2$ and introduce a set $A$:
\begin{equation}
A=\{M+1,\ldots,2M-2\}.
\end{equation}
In case $i=j=M-2$, the variables (\ref{vars}) of arrangement in Definition \ref{2030} are associated with size-$(M-1)$ subsets of $\{1,\ldots,2M-2\}$, the key restriction is to subsets whose intersection with $A$ is sequential:
\begin{equation}\label{varsrestrict}
J\cap A = \{M+1,\ldots,M+|J\cap A|\}.
\end{equation}
It is clear that each such $J$ is uniquely determined by the part not contained in $A$, $J\setminus A$.
Identification $\bar{J}=J\setminus A$ defines a correspondence between variables (\ref{vars}) restricted by (\ref{varsrestrict}), and the variables (\ref{cubevars}).
The corresponding restriction on the set of pair-triples (\ref{pt}) is that participating subsets $K$ and $\{a,b,c,d\}$ satisfy the additional constraints
\begin{equation}
K\cap A = \{M+1,\ldots,M+|K\cap A|\},\label{c1}
\end{equation}
and
\begin{equation}
(K\cup \{a,b,c,d\})\cap A = \{M+1,\ldots,M+|K\cap A|+1\}.\label{c2}
\end{equation}
Notice that this implies exactly one element of $\{a,b,c,d\}$ is contained in $A$.
It can then be verified, identifying $\bar{K}=K\setminus A$ and $\{n,m,l\}=\{a,b,c,d\}\setminus A$, that there is a one-to-one correspondence between pair-triples (\ref{pt}) restricted by (\ref{c1}), (\ref{c2}), and the pair-triples (\ref{cubept}).
\end{proof}

\section{The $D_N$ KdV lattice}\label{KDV}
This section considers a $D_N$ sub-lattice, corresponding to case $j=k=1$ of Definition \ref{domain}.
The purpose is to establish that equation (\ref{skp}) imposed on this domain is equivalent to the discrete KdV equation \cite{WE,hirota-0} in its well-known Schwarzian guise \cite{NC,Nij2,James2} and multidimensional setting \cite{NW,BS1}.

The associated figure is the demihypercube in dimension $N=i+3$.
The transition to natural demihypercube coordinates can be made as follows.
\begin{lemma}\label{demicubelemma}
The domain of Definition \ref{domain} in the case $k=j=1$ can be written alternatively as a set of variables 
\begin{equation}
u_I, \quad I\subseteq\{0,\ldots,i+2\},\ |I| {\textrm{ odd}},
\end{equation}
arranged into pair-triples
\begin{equation}\label{demicubetp}
\{\{u_{I\oplus\{a\}},u_{I\oplus\{b,c,d\}}\},
\{u_{I\oplus\{b\}},u_{I\oplus\{c,d,a\}}\},
\{u_{I\oplus\{c\}},u_{I\oplus\{d,a,b\}}\}\},
\end{equation}
where $I\subseteq\{0,\ldots,i+2\}$, $|I|$ even, $a,b,c,d\in\{0,\ldots,i+2\}$, $|\{a,b,c,d\}|=4$, and $\oplus$ denotes symmetric difference, $I\oplus J=(I\cup J) \setminus (I \cap J)$.
\end{lemma}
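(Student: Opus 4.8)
The plan is to follow the pattern of Lemma \ref{2030lemma}: build an explicit faithful representation of $G$ in the case $k=j=1$ by (signed\nobreakdash-)permutation actions on an index set, recognise $H$ as a point stabiliser in this model, and transport the pair-triple arrangement across the resulting bijection. First I would specialise Definition \ref{domain} to $k=j=1$. The graph (\ref{dia}) then consists of the path $s_1-r_0-t_1-\cdots-t_i$ together with the single extra node $r_1$ forking off $r_0$, which is exactly the Coxeter diagram of type $D_{i+3}$. So $G$ is the Coxeter group of type $D_N$ with $N=i+3$, and $H=\langle S\setminus\{r_k\}\rangle=\langle s_1,r_0,t_1,\ldots,t_i\rangle$ is the maximal parabolic of type $A_{i+2}$ obtained by deleting the distinguished fork-tip; in particular $|H|=N!$ and $[G:H]=2^{N-1}$, which agrees with the number of odd-cardinality subsets of $\{0,\ldots,i+2\}$.

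Next I would fix the representation. Realise $G$ as the group of signed permutations of $\{0,\ldots,i+2\}$ with an even number of sign changes, acting on subsets $I\subseteq\{0,\ldots,i+2\}$ by first permuting the index set and then complementing membership at the positions carrying a sign change --- an operation preserving the parity of $|I|$, since the number of complemented positions is even. Concretely, let $t_n$ act as the transposition of $n+1$ and $n+2$ for $n\in\{1,\ldots,i\}$, let $r_0$ act as the transposition of $1$ and $2$, let $r_1$ act as the transposition of $0$ and $1$, and let $s_1$ act by swapping $0$ and $1$ together with a sign change in both positions. A short check --- of the same nature as the verification of (\ref{intactions}) in Lemma \ref{2030lemma} --- confirms that these satisfy precisely the relations encoded in (\ref{dia}) with $k=j=1$: the reflections assigned to $s_1$ and $r_1$ commute, each braids with the image of $r_0$, and all remaining commutations are immediate. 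The action is transitive on the $2^{N-1}$ odd-cardinality subsets (any such subset is reached from $\{0\}$ using transpositions and the available complement-a-pair moves), hence faithful; and the images of $s_1,r_0,t_1,\ldots,t_i$ all fix the singleton $\{0\}$, so its stabiliser, having order $|G|/2^{N-1}=N!=|H|$, must equal (the image of) $H$. This yields the $G$-equivariant bijection
\begin{equation*}
gH \ \leftrightarrow\ g(\{0\}),\qquad g\in G,
\end{equation*}
between the variables of Definition \ref{domain} and the odd-cardinality subsets, which is the coordinatisation $u_I$.

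It then remains to transport the pair-triples. Computing the image of the basic pair-triple (\ref{bpt}) under the bijection --- tracking $X=r_0r_1H\leftrightarrow\{2\}$ and applying the generators to its entries --- one finds it is exactly of the form (\ref{demicubetp}) for an admissible datum, for instance $I=\{1,2\}$ with $\{a,b,c,d\}=\{0,1,2,3\}$ and distinguished element $d=3$. Because $G$ acts transitively on the pair-triples (\ref{pts}) while acting on the index set $\{0,\ldots,i+2\}$ in a way compatible with the symmetric-difference expressions in (\ref{demicubetp}) --- a signed permutation $g$ sends the pair-triple with datum $(I,\{a,b,c,d\})$ to the one with datum $(gI,\pi(\{a,b,c,d\}))$, $\pi$ the underlying permutation --- the whole orbit (\ref{pts}) is carried onto the whole family (\ref{demicubetp}). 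A short count of admissible data, modulo reordering of $\{a,b,c\}$ and the identification $I\leftrightarrow I\oplus\{a,b,c,d\}$, then confirms the correspondence of pair-triples is one-to-one.

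The step I expect to cause the most trouble is the construction in the second paragraph: choosing the dictionary from generators to signed permutations so that it both reproduces the $D_{i+3}$ diagram and makes $H$ the stabiliser of an odd-cardinality (rather than even-cardinality) subset. This is the parity bookkeeping around the distinguished node, and it is precisely what forces the symmetric difference into (\ref{demicubetp}). Once it is set up, everything follows the template of Lemma \ref{2030lemma}; the only genuinely new bookkeeping is the final check that the $G$-orbit of the basic pair-triple sweeps out each expression (\ref{demicubetp}) exactly once, which requires tracking the distinguished element $d$ and the pairing $I\leftrightarrow I\oplus\{a,b,c,d\}$.
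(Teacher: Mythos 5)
Your construction is essentially the paper's. The paper builds the same coset model internally, by exhibiting the commuting elements $\sigma_1=s_1r_1$, $\sigma_2=(s_1r_1)^{r_0}$, $\sigma_{n+2}=(s_1r_1)^{t_n\cdots t_1r_0}$, indexing the cosets of $H$ by odd subsets $I$ via $\sigma_IH$, and recording exactly your generator actions (transpositions of $0\leftrightarrow1$, $1\leftrightarrow2$, $n+1\leftrightarrow n+2$ for $r_1,r_0,t_n$, and $I\mapsto I\oplus\{0,1\}$ for $s_1r_1$); it then transports the basic pair-triple (\ref{bpt}) to the same octahedron $\{\{u_{\{1\}},u_{\{0,2,3\}}\},\{u_{\{0,1,2\}},u_{\{3\}}\},\{u_{\{2\}},u_{\{0,1,3\}}\}\}$ that you compute, and concludes by the transitive action of $G$. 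Your signed-permutation phrasing and the orbit--stabiliser identification of $H$ are a legitimate external repackaging of the same idea, in the spirit of Lemma \ref{2030lemma}.

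The one step that would fail as written is the closing count. The parametrisation (\ref{demicubetp}) is more redundant than ``reorder $\{a,b,c\}$ and identify $I\leftrightarrow I\oplus\{a,b,c,d\}$'': the distinguished element $d$ is \emph{not} an invariant of the pair-triple. The datum $(I\oplus\{c,d\},\{a,b,d\},c)$ produces exactly the same three pairs as $(I,\{a,b,c\},d)$, so each octahedron in fact arises from eight data $(I,\{a,b,c\},d)$ rather than the two your identifications allow, and your proposed tally overcounts by a factor of four. (Check $N=4$: the $(1,1,1)$ arrangement has four pair-triples, the octahedral cells of the four-dimensional cross-polytope, whereas your count gives sixteen.) Fortunately no bijective count is needed, because the lemma only asserts equality of the two \emph{sets} of pair-triples: every $g\in G$ carries an admissible datum to an admissible datum, so the family (\ref{demicubetp}) is $G$-stable, and $G$ acts transitively on data (the permutation part moves $\{a,b,c,d\}$ and the role of $d$ anywhere, the even sign changes move $I$ anywhere), so the $G$-orbit of the basic pair-triple is the whole family. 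Replacing the count by this transitivity remark --- which is how the paper concludes --- completes the proof.
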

\begin{proof}
Let $k=j=1$ and consider the group $G$ and subgroup $H$ of Definition \ref{domain}.
Define group elements $\sigma_1,\ldots,\sigma_{i+2}$ as
\begin{equation}
\begin{split}
\sigma_1 & := s_1r_1,\\
\sigma_2 & := (s_1r_1)^{r_0},\\
\sigma_{n+2} & := (s_1r_1)^{t_n\cdots t_1 r_0}, \qquad n\in\{1,\ldots,i\}.
\end{split}
\end{equation}
Then $\sigma_n\sigma_m=\sigma_m\sigma_n$ for all $n,m\in\{1,\ldots,i+2\}$, and the left action of generators (\ref{generators}) on the cosets
\begin{equation}\label{demicosets}
\sigma_IH, \quad \sigma_I:=\prod_{n\in I\setminus\{0\}} \sigma_n, \quad I\subseteq\{0,1,\ldots,i+2\},\ |I|{\textrm{ odd}},
\end{equation}
reduce to actions on $I$ as follows:
\begin{equation}\label{dca}
\begin{array}{rl}
r_1: & I\rightarrow I_{0\leftrightarrow 1},\\
r_0: & I\rightarrow I_{1\leftrightarrow 2},\\
t_n: & I\rightarrow I_{n+1\leftrightarrow n+2}, \quad n\in\{1,\ldots,i\},\\
s_1r_1: & I\rightarrow I\oplus \{0,1\}.\\
\end{array}
\end{equation}
This can be verified directly using the defining group relations.
It shows that the left cosets of $H$ in $G$ are exhausted by (\ref{demicosets}), allowing to replace the coset representation with the representation by index-sets $I\subseteq\{0,\ldots,i+2\}$, $|I|$ odd.
Note that in (\ref{dca}) the action of the composition $s_1r_1=\sigma_1$ is given instead of $s_1$ directly, $\sigma_n$ corresponds to the demihypercube reflection $I\mapsto I\oplus\{0,n\}$, and index-sets $I$ correspond to demihypercube vertices.

Introducing notation $u_I$ for variable associated with coset $\sigma_IH$ (defined in (\ref{demicosets})), the basic pair-triple (\ref{bpt}) corresponds to
\begin{equation}\label{dbpt}
\{\{u_{\{1\}},u_{\{0,2,3\}}\},
\{u_{\{0,1,2\}},u_{\{3\}}\},
\{u_{\{2\}},u_{\{0,1,3\}}\}\}.
\end{equation}
The pair-triple arrangement of Definition \ref{domain} in the case $k=j=1$ is therefore written alternatively as the set of pair-triples generated from the basic one (\ref{dbpt}) by the actions (\ref{dca}), which is exactly (\ref{demicubetp}).
\end{proof}
The set of variables in the canonical initial data set (\ref{Xdef}) (associated with cosets (\ref{ivpcosets})), expressed using the adopted notation $u_I$ for variable associated with coset $\sigma_IH$ (\ref{demicosets}), correspond as follows:
\begin{eqnarray}
y_1,y_0,y_{00},y_{10},\ldots,y_{i0} &=& u_{\{0\}},u_{\{1\}},u_{\{2\}},u_{\{3\}},\ldots,u_{\{i+2\}},\label{va1}\\
y_{01},y_{11},\ldots,y_{i1} &=& u_{\{0,1,2\}},u_{\{0,1,3\}},\ldots,u_{\{0,1,i+2\}}.\label{va2}
\end{eqnarray}

Equation (\ref{skp}) imposed on the domain of Lemma \ref{demicubelemma} is a form of the discrete Schwarzian KdV system which is essentially known going back to \cite{NQC,NCWQ}.
It can be viewed as a reduction from the hypercube domain (Section \ref{HYPERCUBE}), either as a periodic 2-cycle \cite{ks1} or, which is equivalent, by imposing additional symmetry \cite{HKM,NS}.
The more usual form arises from a partial integration which introduces the lattice parameters.
This is performed here by exploiting the associated birational group, in terms of which the following is easily verified by calculation:
\begin{lemma}
Let $k=j=1$ and consider the actions on variables (\ref{Xdef}) given in Definition \ref{actions} for underlying pair-triple equation (\ref{skp}).
Through associations
\begin{equation}\label{alphadef}
\frac{\alpha_{n+2}-\alpha_1}{\alpha_0-\alpha_1} = \frac{(y_{n1}-y_1)(y_{n0}-y_0)}{(y_{n1}-y_{n0})(y_1-y_0)}, \quad n\in\{0,\ldots,i\},
\end{equation}
they are consistent with actions on $\alpha_0,\ldots,\alpha_{i+2}$ as follows:
\begin{equation}\label{alphaactions}
\begin{array}{rll}
r_0: & \alpha_1\leftrightarrow \alpha_2,\\
r_1: & \alpha_0\leftrightarrow \alpha_1,\\
s_1: & \alpha_0\leftrightarrow \alpha_1,\\
t_n: & \alpha_{n+1}\leftrightarrow \alpha_{n+2}, & n\in\{1,\ldots,i\},
\end{array}
\end{equation}
where the trivial actions are omitted.
\end{lemma}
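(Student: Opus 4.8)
The plan is to verify the claim by a direct computation, exploiting that all the generator actions on the array \eqref{Xdef} are explicitly known from Definition \ref{actions}, and that in the case $k=j=1$ the only non-permutation action is that of $r_0$, which uses the multi-ratio equation \eqref{skp}. First I would observe that the quantity defined by the right-hand side of \eqref{alphadef} depends only on the four variables $y_0,y_1,y_{n0},y_{n1}$, which are precisely the entries sitting in rows $0,1$ and columns $0,n$ of the rectangular array (together with the single-column array entry $y_1$); in the demihypercube notation of Lemma \ref{demicubelemma} these are $u_{\{1\}},u_{\{0\}},u_{\{n+2\}},u_{\{0,1,n+2\}}$. So the proposed substitution assigns to each index $n\in\{0,\ldots,i\}$ a cross-ratio-like expression, normalised so that $\alpha_0\mapsto 1$ and $\alpha_1\mapsto 0$ under the obvious affine rescaling; only the ratios $(\alpha_m-\alpha_1)/(\alpha_0-\alpha_1)$ are actually determined, which is exactly the gauge freedom one expects for KdV lattice parameters.

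Next I would check each of the four lines of \eqref{alphaactions} in turn. The actions of $t_n$, $r_1$ and $s_1$ are pure permutations of the array entries, so the effect on each expression \eqref{alphadef} can be read off combinatorially: $t_n$ swaps columns $n$ and $n+1$ (for $n\ge 1$), hence swaps the roles of $\alpha_{n+1}$ and $\alpha_{n+2}$ and fixes the rest; $r_1$ swaps $y_0\leftrightarrow y_1$ (acting on the single-column array $y_0\leftrightarrow y_1$... more precisely $r_1:y_0\leftrightarrow y_1$ in the notation of \eqref{rational_actions}), and one checks the expression \eqref{alphadef} transforms so as to interchange $\alpha_0$ and $\alpha_1$; similarly $s_1$ swaps columns $0$ and $1$, and because $\alpha_0$ and $\alpha_1$ are the two normalising values this again reads as $\alpha_0\leftrightarrow\alpha_1$. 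These three cases are essentially bookkeeping once the index correspondence \eqref{va1}--\eqref{va2} is in hand.

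The one case requiring genuine work is $r_0$, which acts by $y_0\leftrightarrow y_{00}$ together with $y_{mn}\mapsto\bar y_{mn}$, where $\bar y_{mn}$ solves the pair-triple equation on \eqref{eqpt}; specialised to $k=j=1$ this means $\bar y_{n1}$ is determined from $y_0,y_{01},y_{n0},y_{00},y_{n1}$ by \eqref{skp}. I would substitute this into \eqref{alphadef} and show that the resulting value of $(\alpha_{n+2}-\alpha_1)/(\alpha_0-\alpha_1)$ after the action equals the value computed before the action with $\alpha_1$ and $\alpha_2$ interchanged — equivalently, that $\alpha_{n+2}$ is fixed for all $n\ge 1$ while $\alpha_1$ and $\alpha_2$ swap. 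Concretely, using \eqref{skp} to eliminate $\bar y_{n1}$, the claim becomes a rational identity in $y_0,y_1,y_{00},y_{01},y_{n0},y_{n1}$; I expect it to follow by clearing denominators and recognising the six-term multi-ratio relation, and this identity is the main obstacle. A cleaner route, which I would try first, is to use Proposition \ref{alg}: the three orbits $\{y_0,\bar y_{mn}\}$, $\{y_{0n},y_{m0}\}$, $\{y_{00},y_{mn}\}$ lie on a common M\"obius involution, so there is a M\"obius coordinate in which that involution is $x\mapsto -x$; in such a coordinate the expression \eqref{alphadef} becomes manifestly a ratio of the squared coordinates, and the action of $r_0$ (which exchanges $\{y_0,\bar y_{mn}\}$ internally while fixing the other orbits setwise) is transparent. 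Either way, once $r_0$ is dispatched the lemma is complete, since the relations $r_0^2=s_1^2=\cdots=\id$ and the braid and commutation relations among the $\alpha$-actions are immediate for permutations.
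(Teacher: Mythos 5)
Your proposal takes essentially the same approach as the paper, which offers no written proof at all — the lemma is introduced with the remark that it "is easily verified by calculation" — and your generator-by-generator verification (pure bookkeeping for $r_1$, $s_1$, $t_n$, and for $r_0$ the elimination of $\bar{y}_{n1}$ via (\ref{skp}) followed by checking that $(\alpha_{n+2}-\alpha_1)/(\alpha_0-\alpha_1)$ is carried to $(\alpha_{n+2}-\alpha_2)/(\alpha_0-\alpha_2)$) is precisely that calculation. One minor caution on your proposed shortcut for the $r_0$ step: the variable $y_1$ does not belong to any of the three orbits in (\ref{eqpt}), so after normalising the M\"obius involution to $x\mapsto -x$ the expression (\ref{alphadef}) does not become a ratio of squared coordinates as claimed — though since (\ref{alphadef}) is a genuine cross-ratio of $y_{n1},y_0,y_1,y_{n0}$ and hence M\"obius-invariant, the normalisation itself is legitimate and still simplifies the direct computation, which in any case goes through as your primary route.
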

Equations (\ref{alphadef}) allow the subset of variables in the initial data set (\ref{va2}), to be replaced by {\it lattice parameters}, $\alpha_0,\ldots,\alpha_{i+2}$, on which the group acts not rationally, but by pure permutation.
The resulting partially-integrated system is generated by the group permutation action (Definition \ref{domain} and (\ref{alphaactions})) from the basic set of equations (\ref{alphadef}). 
Written in the demihypercube variables (Lemma \ref{demicubelemma}), this is recognisable as the lattice Schwarzian KdV system:
\begin{prop}\label{SKDVdemicube}
Equation (\ref{skp}) imposed on the domain of Definition \ref{domain} in the case $k=j=1$, is equivalent to the system
\begin{equation}\label{skdv2}
\frac{\alpha_n-\alpha_l}{\alpha_m-\alpha_l}=\frac{(u_{I\oplus \{n,m\}}-u_{I\oplus\{m,l\}})(u_{I\oplus \{n,l\}}-u_{I})}{(u_{I\oplus \{n,m\}}-u_{I\oplus\{n,l\}})(u_{I\oplus \{m,l\}}-u_{I})},
\end{equation}
where $u_I$ denotes variable associated with coset $\sigma_IH$ (\ref{demicosets}),
\begin{equation}\label{demicube}
I\subseteq\{0,\ldots,i+2\}, \ |I|{\textrm{ odd}}, \quad n,m,l\in\{0,\ldots,i+2\}, \ |\{n,m,l\}|=3,
\end{equation}
and $\alpha_0,\ldots,\alpha_{i+2}$ are free parameters.
\end{prop}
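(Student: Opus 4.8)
The plan is to build on the partial integration already in place rather than re-derive everything. By the preceding lemma, the reversible substitution (\ref{alphadef}) replaces the initial data (\ref{va2}) by parameters $\alpha_0,\dots,\alpha_{i+2}$ on which $G$ acts by the permutations (\ref{alphaactions}), and the partially-integrated system is generated, under the combined action of $G$ — by (\ref{dca}) on the $u_I$ and by (\ref{alphaactions}) on the $\alpha_n$ — from the basic equations (\ref{alphadef}). Since (\ref{alphadef}) is a M\"obius relation in each $y_{n1}$ it is invertible, so the equivalence in the statement will follow once (a) the basic equations (\ref{alphadef}), rewritten in the demihypercube variables of Lemma \ref{demicubelemma}, are identified as particular instances of (\ref{skdv2}); (b) the form (\ref{skdv2}) is shown to be preserved by the combined $G$-action; and (c) the $G$-orbit of one such instance is shown to exhaust all admissible instances of (\ref{skdv2}).

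For (a) I would substitute the correspondences (\ref{va1})--(\ref{va2}) — in particular $y_1=u_{\{0\}}$, $y_0=u_{\{1\}}$, $y_{m0}=u_{\{m+2\}}$ and $y_{m1}=u_{\{0,1,m+2\}}$ — into (\ref{alphadef}). A direct computation then shows that (\ref{alphadef}) with index $n$ is exactly (\ref{skdv2}) for $I=\{1\}$ with the triple $\{n,m,l\}$ taken to be $\{0,1,n+2\}$: the four vertices $I,\,I\oplus\{n,m\},\,I\oplus\{m,l\},\,I\oplus\{n,l\}$ become $\{1\},\{0,1,n+2\},\{0\},\{n+2\}$, reproducing the six difference-terms of (\ref{alphadef}) and its left-hand side $(\alpha_{n+2}-\alpha_1)/(\alpha_0-\alpha_1)$. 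One also checks here the routine fact that any permutation of $(n,m,l)$ in (\ref{skdv2}) merely reciprocates or fixes both sides, so an instance of (\ref{skdv2}) is well defined by its unordered triple and four-vertex set.

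For (b) and (c) I would exploit the explicit structure visible in (\ref{dca}): $G$ is generated by the commuting involutions $\sigma_1,\dots,\sigma_{i+2}$, which act on index-sets by $I\mapsto I\oplus\{0,n\}$ and, being conjugates of $\sigma_1=s_1r_1$ which fixes every $\alpha_n$, act trivially on the parameters; together with $r_0,r_1,t_1,\dots,t_i$, which induce all the adjacent transpositions of the symmetric group on $\{0,\dots,i+2\}$ acting simultaneously on the index-sets of the $u_I$ and on the labels of the $\alpha_n$. A simultaneous permutation $\pi$ sends the instance of (\ref{skdv2}) at $(I,\{n,m,l\})$ to the one at $(\pi I,\{\pi n,\pi m,\pi l\})$, and a product of the $\sigma_n$ sends it to the one at $(I\oplus E,\{n,m,l\})$ with $|E|$ even; in both cases the form (\ref{skdv2}) is reproduced, which is (b). For (c), the $\sigma_n$ realise symmetric difference by every even-size set and these act transitively on the odd-size index-sets, while the permutation part acts transitively on triples; composing, every $(I,\{n,m,l\})$ is $G$-equivalent to $(\{1\},\{0,1,2\})$, which is the instance produced in (a) in the case $n=0$. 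Hence the generated system is precisely (\ref{skdv2}), completing the proof.

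I expect the main obstacle to be bookkeeping rather than depth: getting the symmetric-difference arithmetic and the parameter labels to line up in step (a), and confirming the two transitivity claims in (c) directly against the generators recorded in (\ref{dca}). Everything else is forced once the combined $G$-action is seen to permute the instances of (\ref{skdv2}) transitively, and it may be cleanest to present the whole argument at the level of a single orbit, noting that (\ref{skdv2}) is, like the system (\ref{skp}) on the demihypercube from which it descends, generated by the symmetry group from one equation.
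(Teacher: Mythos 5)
Your proposal is correct and follows essentially the same route as the paper, which presents the proposition as an immediate consequence of the partial integration (\ref{alphadef})--(\ref{alphaactions}) together with the observation that the group action generates the full system from the basic equations; your steps (a)--(c) simply make explicit the orbit computation that the paper leaves implicit. The identification of (\ref{alphadef}) with the instance of (\ref{skdv2}) at $I=\{1\}$, $\{n,m,l\}=\{0,1,n+2\}$, and the transitivity argument via the simultaneous permutations and the even symmetric differences, check out.
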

\begin{remark}\label{dc}
The system (\ref{skdv2}), (\ref{demicube}) is in coordinates of the demihypercube in $N=i+3$ dimensions.
It is better-known in coordinates of the hypercube in one lower dimension: the hypercube coordinates are recovered by deleting all occurrences of the index $0$ and setting $\alpha_0=0$, which can be done without losing generality (it lifts the restriction that $|I|$ is odd).
However, this leaves two kinds of equations.
If $0\in\{n,m,l\}$, then the equation is associated with a quad face of the hypercube, whereas if $0\not\in\{n,m,l\}$, then it is associated with a tetrahedron formed by four vertices of an embedded cube.
The equations on tetrahedra have played a significant role in the theory of this and similar quad-equations \cite{ABS,ABS2,Boll}.
In the demihypercube coordinates, both kinds of equations are accounted for together.
\end{remark}

\begin{remark}\label{N2restriction}
The transition from the canonical initial-value-problem on the hypercube domain for quad-equations, to a general planar quad-graph domain embedded within it, has been explained in \cite{AdVeQ}.
A simple example of this, illustrating the two-dimensional discrete dynamics contained in the system of Proposition \ref{SKDVdemicube}, is a description of the regular $\mathbb{N}^2$ sub-domain, which proceeds as follows.

Consider Proposition \ref{SKDVdemicube} in the case that $i=\infty$. The index subsets $I\subseteq\{1,\ldots,i+2\}$ should, however, be finite, so that meaning can be given to the condition that $|I|$ is odd.
Now introduce a partition of the index-set $\{0,\ldots,i+2\}=A_1\cup A_2 \cup \{0\}$ in which $A_1$ and $A_2$ are each infinite in extent.
Furthermore, define an arbitrary ordering of the two main parts, $A_1$ and $A_2$, so that each can be considered as a sequence with first element.
A subset of the variables $u_I$, $I\subset \{0,\ldots,i+2\}$, $|I|$ odd, are defined by the constraint that $I$ should be in the form of a partition $I=I_1\cup I_2\cup B$, where $B\in\{\{\},\{0\}\}$ is chosen so that $|I|$ is odd, whilst, for $\mu\in\{1,2\}$, $I_\mu\subset A_\mu$ is either empty, or contains the first element of $A_\mu$, and is itself sequential.
Index-sets $I$ restricted in this way are determined completely by the pair of integers $n=|I_1|$ and $m=|I_2|$, which allows to introduce the notation
\begin{equation}\label{N2vars}
u_{n,m}, \quad n,m\in\mathbb{N},
\end{equation}
for the restricted set of variables.
For convenience, set the parameter $\alpha_0=0$ without loss of generality. 
Relabel the parameters $\alpha_1,\ldots,\alpha_{i+2}$, replacing them by $p_1,p_2,\ldots$ associated sequentially with the subset of indices contained in $A_1$, and $q_1,q_2,\ldots$ associated sequentially with the remaining indices that are contained in $A_2$.
With this relabelling, the subset of equations (\ref{skdv2}) which involve only the restricted set of variables (\ref{N2vars}), take the form
\begin{equation}
\frac{p_n}{q_m} = \frac{(u_{n+1,m+1}-u_{n,m+1})(u_{n+1,m}-u_{n,m})}{(u_{n+1,m+1}-u_{n+1,m})(u_{n,m+1}-u_{n,m})},
\end{equation}
where $n,m\in\mathbb{N}$.
The two-dimensional discrete dynamics defined by this sub-system is not additionally constrained by embedding it in the infinite-dimensional demihypercube system.
This is the well-known multidimensional consistency, which encodes the integrability, and most directly the B\"acklund transformations of the system.
\end{remark}

\section{The Cremona system}\label{ECS}
Case $k=1$ of Definition \ref{actions} gives generators for a representation of the group of Coble \cite{COBI,COBII}.
This group is a basic element in the geometric framework of the Painlev\'e equations \cite{sc,10E9,WE10}, and is a canvas for several extensions \cite{8PC,kmnoy,tt}.

The connection is made by a change of variables.
\begin{prop}
Let $k=1$.
Through associations
\begin{equation}\label{uvar}
u_{mn} = \frac{(y_{mn}-y_1)(y_{00}-y_1)(y_{m0}-y_0)(y_{0n}-y_0)}{(y_{mn}-y_0)(y_{00}-y_0)(y_{m0}-y_1)(y_{0n}-y_1)},
\end{equation}
$m\in\{1,\ldots,i\}$, $n\in\{1,\ldots,j\}$, the actions of Definition \ref{actions} obtained from pair-triple equation (\ref{skp}), induce the actions
\begin{equation}\label{painleve-actions}
\begin{array}{rll}
r_1:&u_{mn} \rightarrow 1/u_{mn}, & m\in\{1,\ldots,i\}, \ n\in\{1,\ldots,j\}, \\
r_0:&u_{mn} \rightarrow 1-u_{mn}, & m\in\{1,\ldots,i\}, \ n\in\{1,\ldots,j\}, \\
t_1:&u_{1n} \rightarrow 1/u_{1n}, u_{mn} \rightarrow u_{mn}/u_{1n}, & m\in\{2,\ldots,i\}, \ n\in\{1,\ldots,j\},\\
t_m:&u_{(m-1)n} \leftrightarrow u_{mn}, & m\in\{2,\ldots,i\}, \ n\in\{1,\ldots,j\},\\
s_1:&u_{m1} \rightarrow 1/u_{m1}, u_{mn} \rightarrow u_{mn}/u_{m1}, & m\in\{1,\ldots,i\}, \ n\in\{2,\ldots,j\},\\
s_n:&u_{m(n-1)} \leftrightarrow u_{mn}, & m\in\{1,\ldots,i\}, \ n\in\{2,\ldots,j\},
\end{array}
\end{equation}
on variables $u_{11},\ldots,u_{ij}$.
\end{prop}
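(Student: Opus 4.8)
The plan is to exploit that the expression on the right of (\ref{uvar}) is a M\"obius invariant of the point configuration $y_0,y_1,y_{00},y_{m0},y_{0n},y_{mn}$. Regrouping its factors,
\begin{equation*}
u_{mn}=\frac{(y_{mn}-y_1)(y_{m0}-y_0)}{(y_{mn}-y_0)(y_{m0}-y_1)}\cdot\frac{(y_{00}-y_1)(y_{0n}-y_0)}{(y_{00}-y_0)(y_{0n}-y_1)}
\end{equation*}
exhibits $u_{mn}$ as a product of two cross-ratios, so it may be evaluated in any M\"obius frame. I would fix the frame in which $y_0=0$, $y_1=\infty$ and $y_{00}=1$, where (\ref{uvar}) collapses to $u_{mn}=y_{m0}y_{0n}/y_{mn}$. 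Each generator $g\in S$ is then handled in three steps: (i) apply to the initial data the action prescribed in Definition \ref{actions} (with $k=1$, so the left array is just $y_0,y_1$); (ii) bring the transformed data back to the frame $y_0=0$, $y_1=\infty$, $y_{00}=1$ by the unique M\"obius transformation that does so; (iii) re-read $u_{mn}=y_{m0}y_{0n}/y_{mn}$ and compare with (\ref{painleve-actions}).

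For the six permutation-type generators this is immediate. The generators $t_m$, $s_n$ with $m,n\ge 2$ fix $y_0,y_1,y_{00}$, so step (ii) is trivial and they simply transpose the corresponding $u$'s. For $r_1$, which exchanges $y_0\leftrightarrow y_1$ (that is, $0\leftrightarrow\infty$), the restoring map is $z\mapsto 1/z$, giving $u_{mn}\mapsto 1/u_{mn}$. For $t_1$, which exchanges $y_{0n}\leftrightarrow y_{1n}$ for all $n$, and in particular $y_{00}\leftrightarrow y_{10}$, the restoring map is $z\mapsto z/y_{10}$; substituting and separating the sub-cases $m=1$ and $m\ge 2$ yields $u_{1n}\mapsto 1/u_{1n}$ and $u_{mn}\mapsto u_{mn}/u_{1n}$. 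The argument for $s_1$ is identical with the two indices interchanged.

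The only substantive computation, and the only point where equation (\ref{skp}) enters, is the generator $r_0$. It exchanges $y_0\leftrightarrow y_{00}$ (that is, $0\leftrightarrow 1$) and fixes $y_1=\infty$, so the restoring map is $z\mapsto 1-z$ and step (iii) gives $u_{mn}\mapsto (1-y_{m0})(1-y_{0n})/(1-\bar{y}_{mn})$, where $\bar{y}_{mn}$ solves (\ref{skp}) imposed on $\{\{y_0,\bar{y}_{mn}\},\{y_{0n},y_{m0}\},\{y_{00},y_{mn}\}\}$, i.e.\ the multi-ratio equation read in the frame $y_0=0$, $y_{00}=1$. That relation is linear in $\bar{y}_{mn}$; solving it and simplifying gives $1-\bar{y}_{mn}=y_{mn}(1-y_{m0})(1-y_{0n})/(y_{mn}-y_{m0}y_{0n})$, so the induced map is $u_{mn}\mapsto 1-u_{mn}$, as claimed. (Alternatively, one can appeal to Proposition \ref{alg} directly for the M\"obius involution carrying the three prescribed orbits.) I expect this short rational identity to be the main obstacle; everything around it is bookkeeping, and it is reassuring that the outcome must be consistent with Lemma \ref{consistent_actions} and Theorem \ref{fc}, since the $u_{mn}$ are M\"obius invariants on which the birational representation of $G$ already acts, and (\ref{uvar}) merely recasts that action in the Coble form (\ref{painleve-actions}).
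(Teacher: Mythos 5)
The paper states this proposition without a proof --- it is presented as a direct verification of the change of variables (\ref{uvar}), in the spirit of the preceding lemma's ``easily verified by calculation.'' Your verification is correct and is essentially the calculation the paper has in mind, organised efficiently: since $u_{mn}$ is a product of two cross-ratios it is M\"obius-invariant, and since all the actions of Definition \ref{actions} (including $r_0$, because the multi-ratio (\ref{skp}) is itself a M\"obius invariant of the six points, cf.\ Proposition \ref{alg}) commute with a simultaneous M\"obius transformation of all the $y$'s, you may normalise $y_0=0$, $y_1=\infty$, $y_{00}=1$ so that $u_{mn}=y_{m0}y_{0n}/y_{mn}$; I checked the frame-restoring maps for $r_1$, $t_1$, $s_1$ and the key identity $1-\bar{y}_{mn}=y_{mn}(1-y_{m0})(1-y_{0n})/(y_{mn}-y_{m0}y_{0n})$ for $r_0$, and all reproduce (\ref{painleve-actions}). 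The only point worth making explicit is the M\"obius-equivariance of the $r_0$ action (i.e.\ that the rational solution $\bar{y}_{mn}$ of (\ref{skp}) transforms covariantly), which licenses working in the normalised frame for that generator too; you gesture at this via Proposition \ref{alg}, and with that remark spelled out the argument is complete.
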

Actions (\ref{painleve-actions}) coincide with those tabulated in \cite{kmnoy}, the original form used by Coble (see \cite{COBI} \S 7 (64) and \cite{COBII} \S 4 (15)) are normalised slightly differently.
In \cite{kmnoy} an elliptic-function substitution associates linear actions compatible with (\ref{painleve-actions}). 
The substitution does not lose generality when $(i,j,k)=(5,2,1)$, and the equivariant extension to $(i,j,k)=(6,2,1)$ is what introduces dependent variables of the Painlev\'e equation.
\begin{remark}
The multi-ratio form (i.e., the actions on $y_0$,$y_1$,$y_{00}$,$\ldots$,$y_{ij}$) can be compared with the tau-function formulation of (\ref{painleve-actions}) that has been given in \cite{kmnoy}, and the two kinds of variables should be naturally related without recourse to the intermediary variables $u_{11},\ldots,u_{ij}$ that arise from a comparatively arbitrary normalisation.
One feature of the multi-ratio form is the inherent extension to cases other than $k=1$, and although this exceeds the group considered originally by Coble, birational groups associated with a general T-shaped Coxeter graph are known in the algebraic-geometry context \cite{Looj,sm}, which provide candidates for attributing the desirable geometric meaning to the variables $y_0,\ldots,y_k,y_{00},\ldots,y_{ij}$.
The theory of the multi-ratio representation is not developed yet to the point of answering the questions about geometric interpretation and tau-function formulation.
The obligation in developing this theory, is to ensure that the corresponding concepts known in the KP and KdV settings are respected when restriction is made to the $A_N$ and $D_N$ sub-domains respectively.
However, it holds interesting possibilities, for instance a non-commutative generalisation of (\ref{skp}) that has the same stencil, has been proposed in \cite{CSC}.
\end{remark}

\vskip5mm
\section*{Acknowledgement}
This work is supported in part by the Australian Research Council, Discovery Grant DP 110104151.
I would like to express my gratitude to M. Noumi for some encouragement and to Y. Yamada for some instruction on birational groups related to the Painlev\'e equations.
\bibliographystyle{unsrt}
\bibliography{references}
\end{document}